\def\R{{\mathbb R}}
\def\N{{\mathbb N}}
\newcommand{\bean}{\begin{eqnarray*}} 
	\newcommand{\eean}{\end{eqnarray*}} 
\newcommand{\bea}{\begin{eqnarray}}
\newcommand{\eea}{\end{eqnarray}} 
\newcommand{\valabs}[1]{{\left|#1\right|}}
\def\ds{\displaystyle}
\newtheorem{example}{Example}
\newtheorem{prop}{Proposition}
\newtheorem{rem}{Remark}
\newcommand{\gl}{\lambda}  
\newcommand{\gge}{\varepsilon}
\newcommand{\nsub}[1]{\vert\vert\vert #1 \vert\vert\vert}
\newcommand{\modR}[1]{#1}
\newcommand{\modRR}[1]{\textcolor{black}{#1}}
\newcommand{\modC}[1]{\textcolor{black}{#1}}
\newcommand{\modF}[1]{\textcolor{black}{#1}}
\newcommand{\modFF}[1]{\textcolor{black}{#1}}
\newcommand{\modM}[1]{#1}
\newcommand{\modMM}[1]{\textcolor{black}{#1}}
\begin{document}

\title{Multiscale population dynamics in reproductive biology: singular perturbation reduction in deterministic and stochastic models}
\author{Celine Bonnet}\address{CMAP, Ecole Polytechnique, Route de Saclay, 91128 Palaiseau cedex, France. E-mail adress: celine.bonnet@polytechnique.edu}
\author{Keltoum Chahour}\address{LERMA, Mohammadia Engineering School, Mohamed V University in Rabat, Morocco\\LJAD Université Côte d'Azur, Parc Valrose, 06108 Nice, France}
\author{Frédérique Clément}\address{Inria, Université Paris-Saclay\\LMS, Ecole Polytechnique, CNRS, Université Paris-Saclay}
\author{Marie Postel}\address{Sorbonne Universit\'e, Universit\'e Paris-Diderot SPC, CNRS, Laboratoire Jacques-Louis Lions, LJLL}
\author{Romain Yvinec}\address{PRC,
	INRA, CNRS, IFCE, Université de Tours, 37380 Nouzilly, France}
%
%
\begin{abstract}
\modF{In this study, we describe different modeling approaches for ovarian follicle population dynamics, based on either ordinary (ODE), partial (PDE) or stochastic (SDE) differential equations, and accounting for interactions between follicles. We put a special focus on representing the population-level feedback exerted by growing ovarian follicles onto the activation of quiescent follicles. We take advantage of the timescale difference existing between the growth and activation processes to apply model reduction techniques in the framework of singular perturbations. We first study the linear versions of the models to derive theoretical results on the convergence to the limit models. In the nonlinear cases, we provide detailed numerical evidence of convergence to the limit behavior. We reproduce the main semi-quantitative features characterizing the ovarian follicle pool, namely a bimodal distribution of the whole population, and a slope break in the decay of the quiescent pool with aging.}
\end{abstract}
\begin{resume}
\modF{Dans cette étude, nous décrivons différentes approches de modélisation de la dynamique des populations de follicules ovariens, basées sur des équations différentielles ordinaires (EDO), aux dérivées partielles (EDP) ou stochastiques (SDE), et tenant compte des interactions entre follicules. Nous avons mis un accent particulier sur la représentation des rétro-actions exercées par les follicules en croissance sur l'activation des follicules quiescents. Nous tirons parti de la différence d'échelle de temps entre les processus de croissance et d’activation pour appliquer des techniques de réduction de modèle dans le cadre des perturbations singulières. Nous étudions d’abord les versions linéaires du modèle afin d’en déduire des résultats théoriques sur la convergence vers le modèle limite. Dans le cas non linéaire, nous fournissons des arguments numériques détaillés sur la convergence vers le comportement limite. Nous reproduisons les principales caractéristiques semi-quantitatives caractérisant le pool de follicules ovariens, à savoir une distribution bimodale  de la population totale et une rupture de pente dans la décroissance du pool de follicules quiescents avec le vieillissement.}
\end{resume}
\maketitle
\section*{Introduction}

In mammals, the pool of oocytes (egg cells) available for a female throughout her reproductive life is fixed very early, either during the fetal life or in the perinatal period.  All along their maturation, oocytes are sheltered within spheroidal somatic structures called ovarian follicles. Folliculogenesis is the process of growth and maturation undergone by ovarian follicles from the time they leave the pool of quiescent, primordial follicles until ovulation, when they release a fertilizable oocyte. 

Follicle growth is first due to the enlargement of the oocyte, then to the proliferation of somatic cells organized into successive concentric cell layers, and finally to the inflation of a fluid-filled cavity (antrum) that forms above a critical size. The activation of primordial follicles can occur at any time once they are formed \cite{monniaux_18b}\modF{, even if they can remain quiescent for up to tens of years \cite{reddy_10}}. Growing follicles can progress along the first developmental stages (known as ``basal development'') before puberty. The final developmental stages (known as ``terminal development'') can only occur after puberty ; they are related to the dynamics of ovarian cycles, involving endocrine feedback loops between the ovaries on one side, and the hypothalamus and pituitary gland on the other side. The whole sequence of development spans several months, as assessed by cell kinetics studies \cite{gougeon_96} or grafts  of ovarian cortex \cite{donnez_14}. The terminal stages are the shortest ; they cover a few weeks at most.

Since follicle activation is asynchronous, all developmental stages can be observed in the ovaries at any time during reproductive life. The follicle distribution (mostly studied using the size as a maturity marker) has a characteristic bimodal pattern, which is remarkably preserved between species. This pattern remains similar with ovarian aging, yet with a decreased amplitude  \cite{livre_bio_INRA_folliculogenese}, as a result of the progressive exhaustion of the quiescent pool. Such a distribution is shaped not only by the differences in the follicle activation times, but also by the hormonal interactions between follicles \cite{monniaux_14}. In particular, the activation and growth rates in the earliest stages are moderated by the Anti-M\"ullerian Hormone (AMH) secreted locally by the subpopulation of ``intermediary'' follicles (rigorously speaking: from the fully activated one-layer stage to the pre-antral and small antral stages) \cite{visser_05}. At the other end, the selection of ovulatory follicles results from a competition-like process operating amongst terminally developing follicles \cite{clement_13b}, which is mediated by endocrine controls and associated with a species-specific number of ovulations. Namely, inhibin (a peptid hormone) and estradiol, produced by the mature follicles, feedback onto the pituitary gland, leading to a drop in a pituitary hormone (the Follicle-Stimulating Hormone) supporting follicle survival.

Less than one in a thousand of the follicles manage to reach the ovulatory stage. All others \modF{disappear} through a degeneration process (atresia) associated with the death of the somatic cells (during mid and terminal folliculogenesis) or oocyte (in the quiescent pool and during early folliculogenesis). For instance, in humans, the quiescent pool size is of the order of 1 million follicles, amongst which only some hundreds will reach ovulation \cite{monniaux_18}.\\

\modF{Experimentalist} investigators have proposed classifying follicle development into different stages, according to morphological and functional criteria such as follicle and oocyte diameters, number of cell layers, number of somatic cells, antrum formation \cite{gougeon_96,pedersen_68}. Hence, a natural formalism to consider when modeling follicle population dynamics is that of compartmental modeling, using either deterministic or stochastic rates for transfer ($\lambda_i$) and exit ($\mu_i$) rates (see Equation \eqref{fig:compartment}).
Pioneering studies (see e.g. \cite{faddy_76}) have focused on fitting the parameters entering these rates according to follicle numbers available in each developmental stage. However, these studies remained rather descriptive and considered at best possibly time-varying (piecewise constant) rates \cite{faddy_88}, yet with no follicle interaction.

\begin{equation}\label{fig:compartment}
\begin{CD}
	F_0 	@>\lambda_0>> F_1 	@>\lambda_1>> \cdots  	@>\lambda_{N-1}>> F_{N} \\
	@VV\mu_0V  @VV\mu_1V @VV\cdots V @VV\mu_NV\\
	\emptyset @.  \emptyset @.  \emptyset @.  \emptyset
\end{CD}
\end{equation}

Most of the classification criteria change in a continuous manner. In addition, the most common variable available to monitor follicle development on the ovarian scale, follicle size, is an intrinsically continuous variable. Hence, another suitable modeling formalism is that of PDE models for structured population dynamics. Although the interest of such a formalism has been pointed out quite early \cite{mariana_77}, it has yet not been implemented. 

Finally, in some situations, a discrete stochastic formalism can be useful both to handle finite-size effects and follow individual follicle trajectories. This is especially true for the relatively small cohort of terminally developing follicles, and for transient physiological regimes when follicle pools are either still replenishing, or, on the contrary getting progressively exhausted. In any case, such a formalism gives insight into the fluctuations around the average deterministic behavior. 

In this study, we describe different modeling approaches for follicle population dynamics, based on either ODE, PDE or SDE, and accounting for interactions between follicles. We put a special focus on representing the population-level modulation exerted by growing ovarian follicles on the activation of quiescent, primordial follicles. We take advantage of the timescale difference between the growth and activation processes to apply model reduction techniques in the framework of singular perturbations (slow/fast systems). 

The paper is organized as follows. We successively introduce the ODE, PDE and SDE formulation of the model for follicle population dynamics. We describe the initial (non-rescaled) model in the ODE case. In each case, we introduce (i) the model in rescaled timescale exhibiting a slow/fast structure with a small perturbation parameter ($\varepsilon > 0$) and (ii) the model in the limit $\varepsilon\to 0$. We discuss the well-posedness of the limit models in two situations: the linear formulation and a weakly nonlinear formulation in which only the quiescent follicle population is subject to a feedback from the remaining of the population. In the linear case, we prove the convergence of the rescaled to the limit models. In the nonlinear case, we provide detailed numerical evidence of convergence. The numerical illustrations are settled within a biologically-realistic framework, allowing us to reproduce the main semi-quantitative features characterizing the dynamics of the ovarian follicle pool, namely a bimodal distribution of the whole population and a slope break in the decay of the quiescent pool with aging.

	\section{\modF{Compartmental, ODE-based} model}\label{sec:ode}

	\subsection{Initial model}
	Starting from the schematic model (see \modM{Eq.} \eqref{fig:compartment}), we formalize a system of nonlinear ordinary differential equations (ODE) as follows. Let $d\in \N^*$ and $y=(y_0,\dots, y_d)$ be a function such that, for all $i\in \{0,\dots,d\}$, $y_i : t\in \mathbb{R}_+ \mapsto y_i(t) $ represents the time evolution of the number of follicles of maturity $i$. Follicles in the first compartment ($i=0$) are named quiescent follicles, and their maturation and death rates are denoted by $\bar{\gl_0}$ and $\bar{\mu_0}$, respectively. Follicles in the \modF{intermediate} compartments ($1\leq i \leq d-1$) are the growing follicles, and may either mature and go to the next maturation stage $i+1$, at rate $\gl_i$, or die at rate $\mu_i$. Follicles in the last compartment ($i=d$) are named the mature follicles and can only die at rate $\mu_d$, \textit{i.e.} $\gl_d=0$ 
	(death in this compartment corresponds to either degeneration or ovulation).
	All the rates $(\mu_i,\lambda_i)$ may depend on the growing and mature follicles population (non-local interactions), which leads
to the following nonlinear ODE system
	\begin{equation}\label{eq:ode_original}
	\left\lbrace \begin{array}{lll}
	\ds\frac{d y_0(t)}{dt} &\ds=& \ds-\big(\bar{\gl_0}(y(t)) + \bar{\mu_0}(y(t)) \big) y_0(t) \,,\\ \\
	\ds\frac{d y_1(t)}{dt} &\ds=&\ds \bar{\gl_0}(y(t)) y_{0}(t)-\big(\gl_1(y(t)) + \mu_1(y(t)) \big) y_1(t),\\ \\
	\ds\frac{d y_i(t)}{dt} &\ds=& \ds\gl_{i-1}(y(t)) y_{i-1}(t)-\big(\gl_i(y(t)) + \mu_i(y(t)) \big) y_i(t), \quad i \in \{2,\dots,d\}
	\end{array}\right.
	\end{equation}
	where, for $i=0$, 
	\begin{equation}\label{eq:lambda0_mu0_ode}
	\bar{\gl_0}(y)=\frac{\bar{f_0}}{1+K_{1,0}\ds\sum_{j=1}^d a_j y_j},\quad \bar{\mu_0}(y)=\bar{g_0}\left({1+K_{2,0}\ds\sum_{j=1}^d b_{j} y_j}\right)\,,
	\end{equation}
	with non-negative parameter constants $\bar{f_0}$, $\bar{g_0}$, $K_{1,0}$, $K_{2,0}$, and $a_j\in [0,1], b_j\in [0,1]$.
	
	For $i\in \{1,\dots,d\}$,
	\begin{equation}\label{eq:lambdai_mui_ode}
	\gl_{i}(y)=\dfrac{f_i}{1+K_{1,i}\ds\sum_{j=1}^d \omega_{1,j}y_j},\quad \mu_i(y)= g_i\left({1+K_{2,i}\ds\sum_{j=1}^d \omega_{2,j}y_j}\right)\,,
	\end{equation}
	with non-negative parameter constants, $f_i$ ($f_d=0$), $K_{1,i}$, $g_i$, $K_{2,i}$ and $\omega_{1,j} \in [0,1]$, $\omega_{2,j} \in [0,1]$. The specific functional forms of the rate coefficients are motivated by biological knowledge (see Introduction). The population feedback tends to lower the maturation rate and to raise the death rate.
	With a non-negative vector $y^{in}\in \R_+^{d+1}$ as initial data, one \modF{can see} that Eq.~\eqref{eq:ode_original} generates \modF{a unique} non-negative solution for all times (the right-hand side is globally Lipschitz, with positive off-diagonal entries). Moreover, one can obtain immediately the following conservation law,
	\begin{equation}\label{eq:ode_original_conservation}
	\frac{d }{dt}\sum_{i=0}^d y_i(t) = - \bar{\mu_0}(y(t)) y_0(t)-\sum_{i=1}^d \mu_i(y(t)) y_i(t)\leq 0\,,
	\end{equation}
	which shows that any follicle sub-population $y_i$ is globally bounded.
	
	\subsection{Rescaled model}
	
	As outlined in the Introduction,  before reproductive \modF{senescence}, quiescent follicles are \modF{very numerous} compared to the growing and mature follicles, \modF{follicle activation dynamics} are much slower \modF{than growth dynamics}, \modR{yet the flow of follicles between each compartment is of the same order}. \modF{In consistency with this timescale contrast, we} introduce a small positive parameter $\varepsilon \ll 1$, such that
	\begin{equation}
	\bar{f_0} = \gge f_0\,, \quad  \bar{g_0} = \gge g_0 \,, \quad y_0^{in}= \frac{x_0^{in}}{\gge}
	\end{equation} 
	with non-negative constants $f_0$, $g_0$ and positive initial data $x_0^{in}$, independent of $\gge$. Note that the initial flow $\bar{f_0}y_0^{in}=f_0x_0^{in}$ is preserved.
	
	We then define the rescaled solution $x=(x_0,\dots,x_d)$ by, for all $t \geq 0$,
	\begin{equation}
	x_0(t) = \gge y_0(t/\gge), \quad \text{ and for all } i\geq 1\,,\quad x_i(t) = y_i(t/\gge).
	\end{equation}
	Then $x$ is solution of the following system
	\begin{equation}\label{eq:ode_rescaled}
	\left\lbrace \begin{array}{lll}
	\ds\frac{d x_0(t)}{dt} &\ds=&\ds -\big(\gl_0(x(t)) + \mu_0(x(t)) \big) x_0(t) \\ \\
	\ds\gge \frac{d x_i(t)}{dt} &\ds=& \ds\gl_{i-1}(x(t)) x_{i-1}(t)-\big(\gl_i(x(t)) + \mu_i(x(t)) \big) x_i(t), \quad i \in \{1,\dots,d\}
	\end{array}\right.
	\end{equation}
	with initial condition given by $x_0(t=0)=x_0^{in}$, and  $x_i(t=0)=x_i^{in}:=y_i^{in}$, for $i \in \{1,\dots,d\}$, and where, for $i=0$, 
	\begin{equation}\label{eq:lambda0_mu0_ode_rescaled}
	\gl_0(x)=\frac{f_0}{1+K_{1,0}\ds\sum_{i=1}^d a_ix_i},\quad \mu_0(x)=g_0\left({1+K_{2,0}\ds\sum_{j=1}^d b_{j}x_j}\right)\,,
	\end{equation}
	and \modF{$\gl_{i}$ and $\mu_i$, for $i \in \{1,\dots,d\}$, are defined in Eq.~\eqref{eq:lambdai_mui_ode}}. We note that the conservation law \eqref{eq:ode_original_conservation} becomes
	\begin{equation}\label{eq:ode_rescaled_conservation}
	\frac{d }{dt} x_0(t) + \gge\frac{d }{dt}\sum_{i=1}^d x_i(t) = - \sum_{i=0}^d \mu_i(x(t)) x_i(t)\leq 0\,,
	\end{equation}
	
	We now consider the limit for which the small parameter $\gge$ tends to $0$ and \modF{the associated sequence} $(x^\gge)$ solution of system \eqref{eq:ode_rescaled}. In such \modF{a} case, \modF{system \eqref{eq:ode_rescaled}} is called a ``slow-fast'' system ($x_0^\gge$ is the slow variable, $(x_1^\gge,\dots,x_d^\gge)$ are the fast variables) and the \modF{study of the} limit behavior when $\gge\to 0$ is a singular perturbation problem \modR{(see for instance \cite{Wasow1988})}.
	
	\subsection{Limit model}
	
	Formally, setting $\gge=0$ in \modF{system \eqref{eq:ode_rescaled} leads to}  the following system:
	\begin{equation}\label{eq:ode_limit}
	\left\lbrace \begin{array}{lll}
	\ds\frac{d \bar {x}_0(t)}{dt} &\ds=&\ds -\big(\gl_0(\bar {x}(t)) + \mu_0(\bar {x}(t)) \big) \bar {x}_0(t) \\
	\ds 0 &\ds=& \ds\gl_{i-1}(\bar {x}(t)) \bar {x}_{i-1}(t)-\big(\gl_i(\bar {x}(t)) + \mu_i(\bar {x}(t)) \big) \bar {x}_i(t), \quad i \in \{1,\dots,d\}
	\end{array}\right.
	\end{equation}
	with initial  condition given by $\bar x_0(t=0)=x_0^{in}$, and  $\bar x_i(t=0)=\bar x_i^{in}\geq 0$, that satisfies the second line of Eq.~\eqref{eq:ode_limit} at $t=0$. Note that system \eqref{eq:ode_limit} is a differential-algebraic system, in which the variable $(\bar {x}_1,\dots,\bar {x}_d)$ can be seen as \modF{reaching} instantaneously (at any time $t$) a quasi-steady state, ``driven'' by the time-dependent variable $\bar {x}_0(t)$.
	
	System \eqref{eq:ode_limit} is not necessarily well-posed, as there may be several solutions \modF{to} the second line of Eq.~\eqref{eq:ode_limit}.  \modF{In the next two specific examples, we can prove that system \eqref{eq:ode_limit} does admit a single positive solution, which is a natural} \modF{limit candidate} for the sequence $x^\gge$.

	\begin{example}[Linear case]\label{ex:ode_limit_linear}
		Let us suppose that $K_{1,i}=K_{2,i}=0$ for all $i\in \{0,\dots,d\}$, \modR{and $f_i+g_i>0$ for all $i\in \{1,\dots,d\}$}. Then, system \eqref{eq:ode_limit} becomes linear, and has a unique solution given by \modC{$\bar x =(\bar x_0,\dots,\bar x_d)$ such that for all $t \geq 0$,}
		\begin{equation}\label{eq:ode_limit_linear}
		\left\lbrace \begin{array}{lll}
		\ds\bar {x}_0(t) &\ds=&\ds x_0^{in}\exp\left(-(f_0+g_0)t\right)\,,\\
		\ds \bar  x_i(t) &\ds=& \ds \left(\prod_{j=0}^{i-1}\frac{f_j}{f_{j+1}+g_{j+1}}\right)\bar {x}_0(t), \quad i \in \{1,\dots,d\}\,.
		\end{array}\right.
		\end{equation}
	\end{example}
	
	\begin{example}[Feedback onto quiescent follicle activation and death rates]
		Let us suppose that \modF{$K_{1,i}=0$, $K_{2,i}=0$} \modR{and $f_i+g_i>0$} for $i\in \{1,\dots,d\}$, yet $K_{1,0}>0$ and $K_{2,0}\geq 0$. Then, system \eqref{eq:ode_limit} \modR{with positivity requirement ($x_i\geq 0$ for $i\in \{0,\dots,d\}$)} can be rewritten as:
		\begin{equation}\label{eq:ode_limit_K_1_0}
		    \left\lbrace \begin{array}{lll}
		    \ds\frac{d \bar {x}_0(t)}{dt} &\ds=&\ds -\big(\gl_0(\bar {x}(t)) + \mu_0(\bar {x}(t)) \big) \bar {x}_0(t)\,, \\
		\ds \bar  x_1(t) &\ds=& \ds \frac{-(f_1+g_1)+\sqrt{(f_1+g_1)^2 +4f_0\bar {x}_0(t)(f_1+g_1)K_{1,0}\sum_{i=1}^d \,a_i \prod_{j=1}^{i-1}\frac{f_j}{f_{j+1}+g_{j+1}}}}{2(f_1+g_1)K_{1,0}\sum_{i=1}^d \, a_i\prod_{j=1}^{i-1}\frac{f_j}{f_{j+1}+g_{j+1}}}\,,\\
		\ds \bar  x_i(t) &\ds=& \ds \left(\prod_{j=1}^{i-1}\frac{f_j}{f_{j+1}+g_{j+1}}\right)\bar {x}_1(t), \quad i \in \{2,\dots,d\}\,.
		\end{array}\right. 
		\end{equation}
		\modF{which admits a unique solution. }
		Indeed, one can verify that $\bar  x_1$ is the \modR{only} positive root of a polynomial of degree $2$, namely
		\[f_0\bar  x_0=(f_1+g_1)\left(1+K_{1,0}\sum_{i=1}^d \, a_i\prod_{j=1}^{i-1}\frac{f_j}{f_{j+1}+g_{j+1}}\bar  x_1\right)\bar  x_1\]
		
	\end{example}

	\subsection{Convergence in the linear case}
	
	In this paragraph, we assume that $K_{1,i}=K_{2,i}=0$ for all $i\in \{0,\dots,d\}$, \modR{and $f_i+g_i>0$ for all $i\in \{1,\dots,d\}$} as in Example \ref{ex:ode_limit_linear}. In such a case, one can solve system \eqref{eq:ode_rescaled} explicitly for each $\gge>0$. The solution is given by, using vectorial notations, for all $t \geq 0$,
	\begin{equation}\label{eq:ode_rescaled_solution_linear}
	\left\lbrace \begin{array}{lll}
	\ds x_0^{\gge}(t) &\ds=&\ds x_0^{in}\exp\left(-(f_0+g_0)t\right), \\
	\ds (x_1^{\gge},\cdots,x_d^{\gge})^T(t) &\ds=& \ds  e^{-\frac{B\,t}{\gge}}(x_1^{in},\cdots,x_d^{in})^T + \int_0^t \frac{1}{\gge}e^{-\frac{B(t-s)}{\gge}}e_1 f_0 x_0^{\gge}(s)ds, 
	\end{array}\right.
	\end{equation}
	where $e_1=(1,0,\cdots,0)^T$ and 
	$$ B=\begin{pmatrix}
	(f_1+g_1) & 0 & \dots & 0 \\
	-f_1 & \ddots & & \vdots\\
	0 & \ddots & \ddots & \vdots\\
	\vdots & &  & 0\\
	0 & 0 & -f_{d-1} &(f_d+g_d)\\
	\end{pmatrix}.$$
	
	It is thus clear that $x_0^{\gge}=\bar {x}_0$ is a constant sequence in $\varepsilon$ (as both $x_0^{\gge}$ and $\bar {x}_0$ have same initial conditions, and same evolution equation). For the fast variables, we prove the following
	\begin{prop}\label{prop:ode_cv_linear}Assume that $K_{1,i}=K_{2,i}=0$ for all $i\in \{0,\dots,d\}$, and $f_i+g_i>0$ for $i\in \{1,\dots,d\}$. Then, for all $\eta>0$, we have
		\begin{equation}
		\lim_{\gge \to 0} \sup_{t>\eta} \max_{i\in \{1,\dots,d\}}  \mid x_i^{\gge}(t) -  \bar  x_i(t) \mid =0\,.
		\end{equation}	
		where $(\bar  x_1,\cdots,\bar  x_d)$ is given in Eq.~\eqref{eq:ode_limit_linear}.
	\end{prop}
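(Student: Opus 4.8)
The plan is to exploit the explicit representation \eqref{eq:ode_rescaled_solution_linear} and perform a boundary-layer analysis, splitting the error into a transient contribution carried by the initial data and a convolution contribution that relaxes to the quasi-steady state. Write $X^\varepsilon=(x_1^\varepsilon,\dots,x_d^\varepsilon)^T$, $X^{in}=(x_1^{in},\dots,x_d^{in})^T$, $\bar X=(\bar x_1,\dots,\bar x_d)^T$, and set $g(s)=f_0\,x_0^{in}\exp(-(f_0+g_0)s)=f_0\,x_0^\varepsilon(s)$, so that the forcing in \eqref{eq:ode_rescaled_solution_linear} is $e_1\,g(s)$. The first point is to recognise the limit vector as the solution of the algebraic relation obtained by setting $\varepsilon=0$ in the fast block of \eqref{eq:ode_rescaled}, namely $B\bar X(t)=g(t)\,e_1$, i.e. $\bar X(t)=g(t)\,B^{-1}e_1$; since $B$ is lower bidiagonal one checks directly that the entries of $g(t)B^{-1}e_1$ coincide with those of \eqref{eq:ode_limit_linear}. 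We then decompose
\begin{equation}
X^\varepsilon(t)-\bar X(t)=\underbrace{e^{-Bt/\varepsilon}X^{in}}_{\text{(A): transient}}+\underbrace{\Big(\textstyle\int_0^t\frac{1}{\varepsilon}e^{-B(t-s)/\varepsilon}e_1\,g(s)\,ds-\bar X(t)\Big)}_{\text{(B): convolution}} .
\end{equation}

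The key spectral input is that, $B$ being lower triangular with diagonal entries $f_i+g_i$, its spectrum is $\{f_i+g_i:1\le i\le d\}\subset(0,\infty)$. Hence for every $\beta$ with $0<\beta<\alpha:=\min_{1\le i\le d}(f_i+g_i)$ there is a constant $C_\beta$ such that $\left\|e^{-Bu}\right\|\le C_\beta\,e^{-\beta u}$ for all $u\ge 0$ (possible Jordan blocks from repeated eigenvalues contribute only polynomial factors, absorbed by lowering $\beta$). In particular $\int_0^\infty e^{-Bu}\,du=B^{-1}$ and $\int_0^\infty u\,\left\|e^{-Bu}\right\|\,du<\infty$. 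This positive-stability of $B$ is what drives the whole argument, and it is exactly what fails near $t=0$, explaining the necessary cutoff $t>\eta$.

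For term (A), the decay bound gives $\left\|e^{-Bt/\varepsilon}X^{in}\right\|\le C_\beta\,e^{-\beta t/\varepsilon}\left\|X^{in}\right\|\le C_\beta\,e^{-\beta\eta/\varepsilon}\left\|X^{in}\right\|$ for $t>\eta$, which tends to $0$ uniformly in $t$ as $\varepsilon\to0$. For term (B), the change of variables $u=(t-s)/\varepsilon$ turns the convolution into $\int_0^{t/\varepsilon}e^{-Bu}e_1\,g(t-\varepsilon u)\,du$, so that
\begin{equation}
\text{(B)}=\int_0^{t/\varepsilon}e^{-Bu}e_1\,\big[g(t-\varepsilon u)-g(t)\big]\,du-\int_{t/\varepsilon}^{\infty}e^{-Bu}e_1\,g(t)\,du .
\end{equation}
Since $g$ is globally Lipschitz with constant $L:=f_0\,x_0^{in}(f_0+g_0)$, on the integration range $0\le u\le t/\varepsilon$ we have $\valabs{g(t-\varepsilon u)-g(t)}\le L\varepsilon u$, whence the first integral is bounded by $L\varepsilon\int_0^\infty u\,\left\|e^{-Bu}\right\|\,du=O(\varepsilon)$, uniformly in $t$. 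The tail integral is bounded by $\left\|g\right\|_\infty\int_{t/\varepsilon}^\infty C_\beta e^{-\beta u}\,du\le \frac{C_\beta}{\beta}\left\|g\right\|_\infty e^{-\beta\eta/\varepsilon}$ for $t>\eta$, again vanishing uniformly. Combining the three estimates yields $\sup_{t>\eta}\left\|X^\varepsilon(t)-\bar X(t)\right\|\le C(e^{-\beta\eta/\varepsilon}+\varepsilon)\to0$, and since the componentwise maximum is controlled by any fixed vector norm, the claim follows.

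I do not expect a genuine obstacle here: the linear structure makes everything explicit, and the only quantitative ingredient is the exponential decay of $e^{-Bu}$, which is immediate from the triangular form of $B$. The one point requiring care is the \emph{uniformity} in the convolution term: one must keep the Lipschitz estimate $O(\varepsilon)$ genuinely independent of $t$ (which it is, after the change of variables sends the $t$-dependence into the integrand only through $g(t-\varepsilon u)-g(t)$), and one must isolate the non-uniform behaviour near $t=0$ into the factor $e^{-\beta\eta/\varepsilon}$, which is precisely why the supremum is taken over $t>\eta$ rather than $t>0$.
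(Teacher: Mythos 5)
Your proof is correct and follows essentially the same route as the paper's: both start from the explicit representation \eqref{eq:ode_rescaled_solution_linear}, identify the limit as $\bar X(t)=f_0x_0^{\varepsilon}(t)B^{-1}e_1$, control the transient $e^{-Bt/\varepsilon}X^{in}$ by the positive spectrum of $B$ for $t>\eta$, and show the convolution term equals $\bar X(t)$ up to $O(\varepsilon)$ plus an exponentially small boundary-layer remainder. The only difference is technical rather than structural --- the paper extracts the $O(\varepsilon)$ by integrating by parts in $s$, while you use the change of variables $u=(t-s)/\varepsilon$ together with the Lipschitz bound on $g$ and the identity $B^{-1}=\int_0^\infty e^{-Bu}\,du$; your bound $\lVert e^{-Bu}\rVert\le C_\beta e^{-\beta u}$ with $\beta<\min_i(f_i+g_i)$, which allows for possible Jordan blocks of $B$, is in fact slightly more careful than the paper's direct estimate.
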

	\begin{proof}
		\modF{From} Eq.~\eqref{eq:ode_rescaled_solution_linear} and initial condition, \modF{and} using integration by parts, we obtain
		\begin{equation*}
		\int_0^t \frac{1}{\gge}e^{-\frac{B(t-s)}{\gge}}e_1 f_0 x_0^{\gge}(s)ds =  \int_0^t B^{-1}e^{-\frac{B(t-s)}{\gge}}e_1 f_0(f_0+g_0) x_0^{\gge}(s)ds + B^{-1}e_1 f_0 x_0^{\gge}(t)-B^{-1}e^{-\frac{Bt}{\gge}}e_1 f_0 x_0^{in}\,.
		\end{equation*}
		As $x_0^\gge$ is uniformly bounded (both in $\gge$ and time) by $x_0^{in}$, we obtain, taking the $1$-norm,
		\begin{multline*}
		\left\lVert \int_0^t B^{-1}e^{-\frac{B(t-s)}{\gge}}e_1 f_0(f_0+g_0) x_0^{\gge}(s)ds \right\rVert \leq   f_0(f_0+g_0)x_0^{in} \left\lVert \int_0^t B^{-1}e^{-\frac{B(t-s)}{\gge}}e_1ds\right\rVert \\
		\leq  f_0(f_0+g_0)x_0^{in}\gge \left\lVert B^{-2}(Id-e^{-\frac{Bt}{\gge}})e_1  \right\rVert \leq   f_0(f_0+g_0)x_0^{in}\gge \nsub{B^{-2}} (1+e^{-\min(f_i+g_i)\frac{t}{\gge}})
		\end{multline*}
		with $\nsub{B^{-2}} = \displaystyle \sup_{x \in \mathbb{R}^d, x\neq 0} \frac{\left\lVert B^{-2} x\right\rVert}{\left\lVert x\right\rVert}$. \modF{The third inequality above} was deduced from $$ \forall t \geq 0, \quad \left\lVert e^{-\frac{Bt}{\gge}}e_1  \right\rVert \leq \left\lVert e_1  \right\rVert e^{-\min(f_i+g_i)\frac{t}{\gge}}
		$$
		We verify that \modC{for all $t \geq 0$}, $B^{-1}e_1 f_0 x_0^{\gge}(t) =(\bar  x_1,\cdots,\bar  x_d)(t)$. Then, we obtain, 
		\begin{multline}
		\sup_{t >\eta } \lVert ( x_1^\gge,\cdots, x_d^\gge)(t)- (\bar  x_1,\cdots,\bar  x_d)(t)\rVert \leq \left(\lVert (x_1^{in},\cdots,x_d^{in})\rVert+ \nsub{B^{-1}} f_0 x_0^{in} \right) e^{-\min(f_i+g_i)\frac{\eta }{\gge}}\\+2f_0(f_0+g_0)x_0^{in} \nsub{ B^{-2} }  \gge,
		\end{multline}
		which converges to $0$ as $\gge$ converges to $0$.
	\end{proof}	

	\begin{rem}\label{rem:ode}
		The proof of Proposition \ref{prop:ode_cv_linear} can also be obtained as a direct application of Tikhonov theorem \cite{Wasow1988}. 
	\end{rem}
	
	\begin{rem}\label{rem:ode2}
		It is apparent in formula \eqref{eq:ode_rescaled_solution_linear} that one cannot hope to obtain a convergence on a time interval starting from $0$ (unless the initial data is ``well-prepared''), and that standard Ascoli-Arzela theorem would not apply in such a case, as the time derivative of $( x_1^\gge,\cdots, x_d^\gge)(t)$ is not uniformly bounded as $\gge\to 0$. 
	\end{rem}

\subsection{Numerical convergence}
In this paragraph, we illustrate the convergence of $(x_0^\gge, x_1^\gge,\cdots, x_d^\gge)$ towards $(\bar  x_0,\bar  x_1,\cdots,\bar  x_d)(t)$ in a nonlinear scenario. The chosen scenario and the parameter values are detailed in \modF{the Appendix} (section \ref{sec:annex}).

\begin{figure}
	\includegraphics{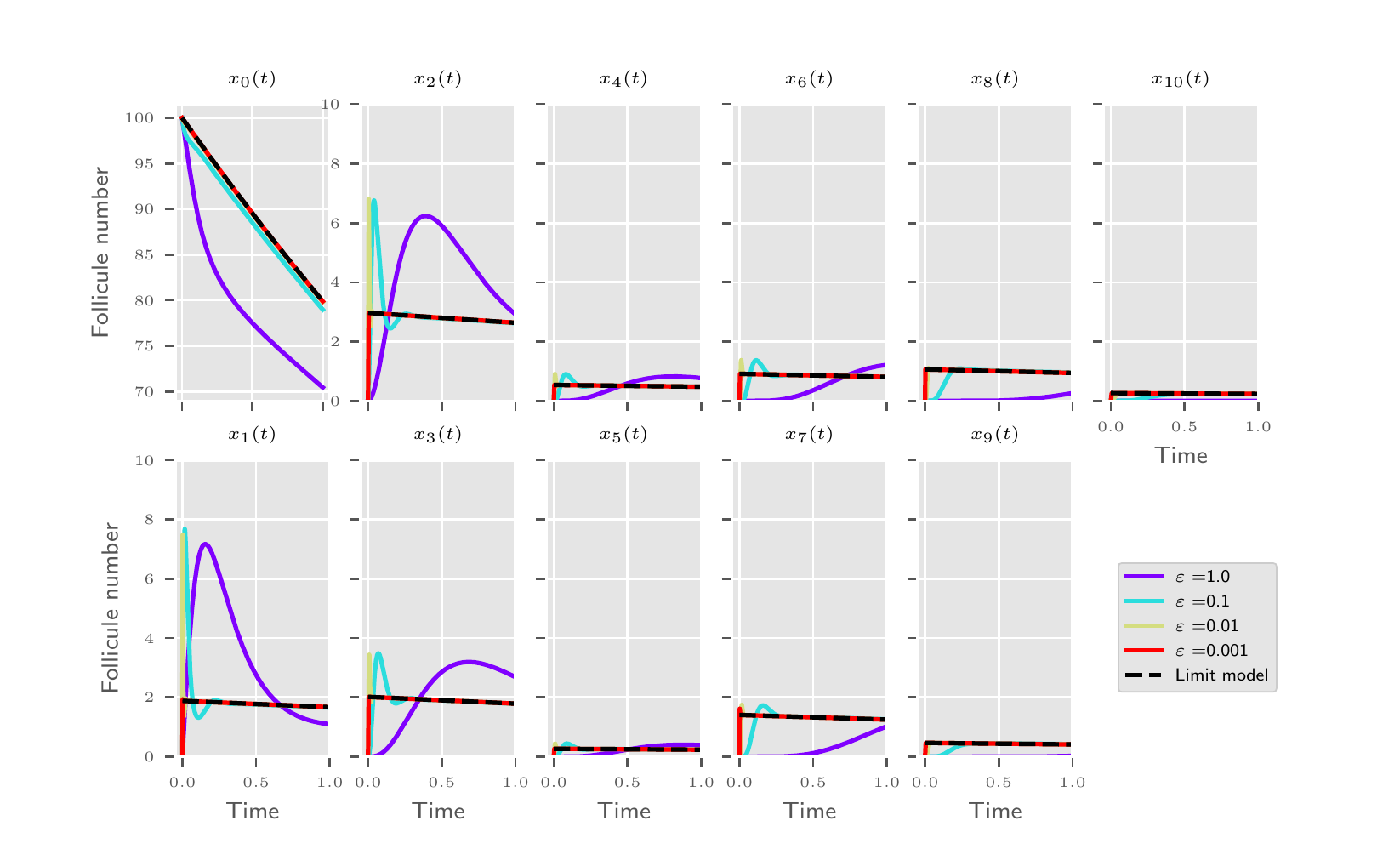}
	\caption{Trajectories \modF{in each maturity compartment} ($d=10$), for the rescaled variables $x_i^\varepsilon$, for different $\varepsilon$ (\modF{solid} colored lines, see legend \modF{insert}) and the reduced limit \modF{variables} $\bar x_i$ (\modF{black} dashed lines).\modR{ See the Appendix (section \ref{sec:annex}) for details on the parameter values used in the numerical simulations.}}
	\label{fig:ode_lim_mod}
\end{figure}
	
\begin{figure}
	\includegraphics{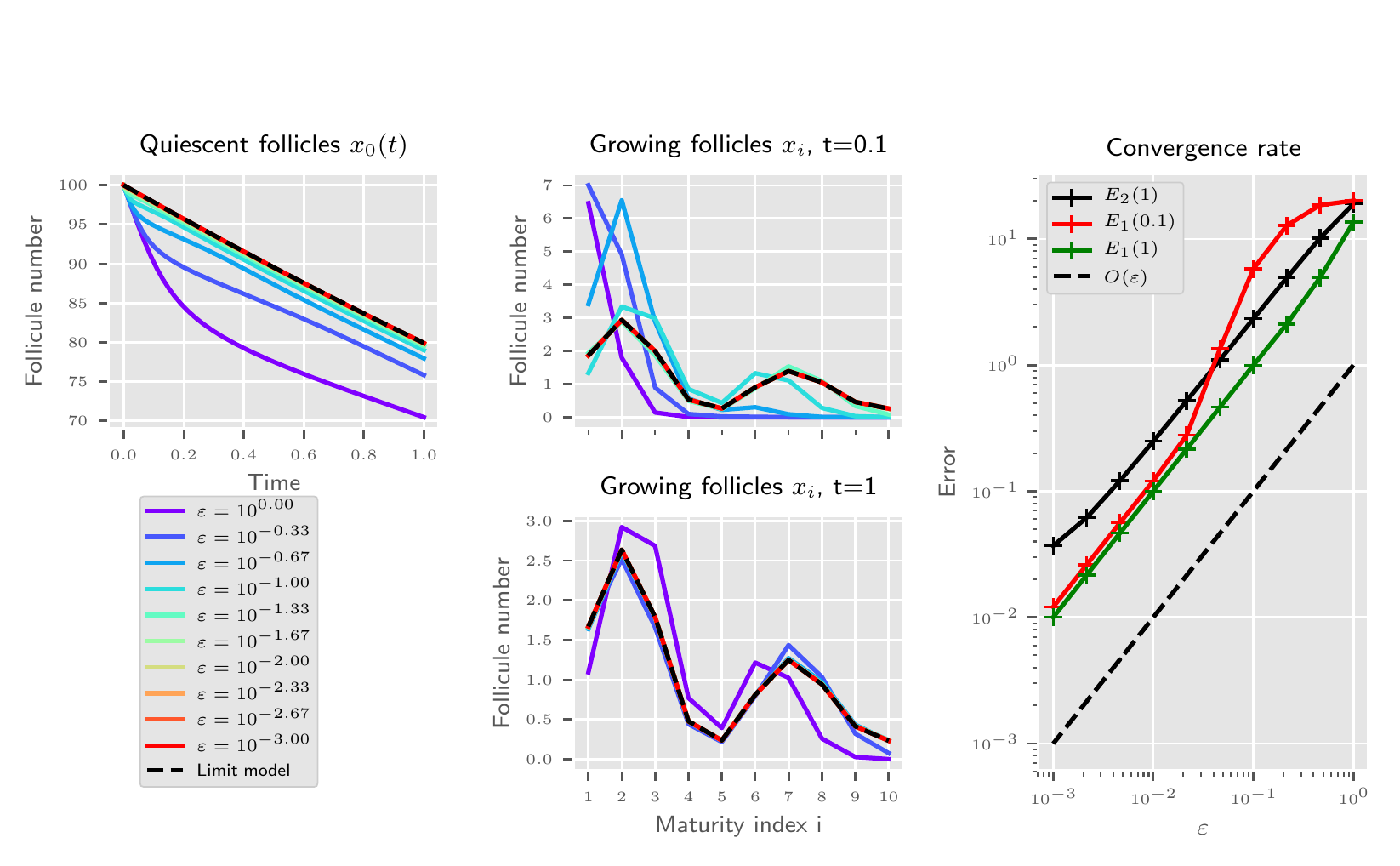}
	\caption{Trajectories \modF{in} the quiescent follicle compartment (\modF{top} left panel) and \modF{distribution of} the growing follicle population according to the maturity index $i$ at time $t=0.1$ (\modF{center top} panel) and time $t=1$ (\modF{center bottom} panel), for the rescaled variables $x_i^\varepsilon$, for different $\varepsilon$ (\modF{solid} colored lines, see legend \modF{insert}) and the reduced limit \modF{variables} $\bar x_i$ (\modF{black} dashed lines). On the right panel, we plot the discrete $l^1$ norm error $E_1(t)$ at a \modF{fixed time} $t=0.1$ and $t=1$ (\modF{solid} red and green lines, resp.) and the $l^1$-cumulative error $E_2(1)$ on $t\in(0,1)$ (black \modF{solid} line) as a function of $\varepsilon$ (see details in the text). The black dashed line is the \modF{straight line of slope 1 according} to $\varepsilon$. \modR{See the Appendix (section \ref{sec:annex}) for details on the parameter values used in the numerical simulations.}}
	\label{fig:ode_lim_mod_hist}
\end{figure}

In Figure \ref{fig:ode_lim_mod}, we plot the \modF{trajectories $x_i(t)$ in each maturity compartment} ($d=10$) for the rescaled \modF{and limit models} on a time horizon $t\in(0,1)$. \modF{In} each compartment, the trajectories of the rescaled model get closer and closer to the limit model as $\varepsilon\to0$. For $\varepsilon=0.001$, \modF{they} are almost indistinguishable. Note however that, for the growing follicles, the initial conditions of the rescaled and limit models are \modF{different}, and the convergence holds only for positive \modF{times}.

In Figure \ref{fig:ode_lim_mod_hist}, \modF{using} the same parameters as in Figure \ref{fig:ode_lim_mod}, we \modF{display the maturity distribution in} the growing follicle population, for various $\varepsilon$. \modR{We can expect from Figure \ref{fig:ode_lim_mod} that  the convergence gets} \modF{better} \modR{for larger times. We confirm this fact in Figure \ref{fig:ode_lim_mod_hist} where we compare the maturity distribution in the growing follicle population at two times, $t=0.1$ and $t=1$.}
\modR{We can quantify the error between the rescaled model and the limit model by} computing the $l^1$ error at time t,
\begin{equation}
E_1(t)= \sum_{i=0}^d \mid X_i^{\varepsilon}(t)-\bar X_i(t) \mid\,,
\end{equation}
and the cumulative error on time interval $[0,T]$,
\begin{equation}
E_2(T)= \int_0^T \sum_{i=0}^d \mid X_i^{\varepsilon}(t)-\bar X_i(t) \mid dt\,,
\end{equation}
\modF{which can be assessed numerically as}
\begin{equation}
\tilde E_2(T)= \sum_{k=0}^{N_T} \delta_t \sum_{i=0}^d \mid X_i^{\varepsilon}(t_k)-\bar X_i(t_k) \mid\,,
\end{equation}
where $t_k=k\delta_t$, for $k=0\modF{\cdots}N_T$. \modF{From the right panel of Figure \ref{fig:ode_lim_mod_hist}, we can see that, at least for a small enough $\varepsilon$, the error is inversely proportional to  $\varepsilon$},  $E_i\approx C^{te}\varepsilon^{-1}$, \modR{where the constant pre-factor may depend on the chosen norm or the particular time $t$.}

\section{PDE model}\label{sec:pde}

	\modF{When considering a continuous maturity variable, the PDE formalism is more suited for representing the follicle population dynamics}. \modF{In this section, we skip} the rescaling procedure, which follows an analogous reasoning as that detailed in section~\ref{sec:ode}, and present directly the rescaled model.

	\subsection{Rescaled model}
	\modM{Denoting by $\rho_0(t)$ the} \modR{number} \modM{of quiescent follicles and by $\rho(t,x)$ the} \modR{population} \modM{density of follicles of maturity $x$, we}
     consider the following coupled ODE-PDE system, for all $t\geq 0$,
	
	\begin{equation}\label{eq:pde_rescaled}
	\left\lbrace \begin{array}{lll}
	\ds \dfrac{d\rho_0(t)}{dt}&\ds =& \ds-(\lambda_0(\rho(t,.)) + \mu_0(\rho(t,.)))\rho_0(t)\,,\\
	\ds \varepsilon \partial_t\rho(t,x)&\ds =& \ds -\partial_x(\lambda(\rho(t,.),x)\rho(t,x))-\mu(\rho(t,.),x)\rho(t,x)\,, \quad\mbox{for } x\in(0,1)\,,\\
	\ds \lim_{x\rightarrow 0}\lambda(\rho(t,.),x)\rho(t,x)&\ds =& \ds \lambda_0(\rho(t,.))\rho_0(t)\,,\\
	\end{array}\right.
	\end{equation}
	where
	\begin{equation}\label{lambda0_mu0}
	\lambda_0(\rho(t,.))=\dfrac{f_0}{1+K_{1,0}\int_0^1 a(y)\rho(t,y)dy}\,,\quad \mu_0(\rho(t,.))=g_0 \left(1+K_{2,0}\int_0^1 b(y)\rho(t,y)dy\right)\,,
	\end{equation}
	and, for all $x\in(0,1)$,
	\begin{equation}\label{lambda_mu}
	\lambda(\rho(t,.),x)=\dfrac{f(x)}{1+K_1(x)\int_0^1 \omega_1(y)\rho(t,y)dy}\,,\quad \mu(\rho(t,.),x)=g(x)\left(1+K_2(x)\int_0^1 \omega_2(y)\rho(t,y)dy\right)
	\end{equation}
	with initial condition
	\begin{equation}\label{init_edp}
	\rho_0(t=0)=\rho_0^{in}\,,\quad \rho(t=0,x)=\rho^{in}(x)\,,\quad x\in(0,1)\,.
	\end{equation}
	We assume \modF{that} $f,g,K_1,K_2,a,b,w_1,w_2,\rho^{in}$ are regular enough functions, and will admit 
	\modF{the} existence and uniqueness of \modF{solutions} of system \eqref{eq:pde_rescaled}.
	\modF{A standard} fixed point argument, based on the mild formulation, \modF{could} be used (see for instance \cite{Collet2000,Evers2015,Evers2016} and references therein), yet this is \modF{beyond the scope of this work}. We can write the following conservation law, that gives (at least formally)
	
	\begin{equation}\label{eq:pde_rescaled_conservation}
	\frac{d }{dt} \rho_0(t) + \gge\frac{d }{dt}\int_0^1\rho(t,x) dx = - \mu_0(\rho(t,.))\rho_0(t)-\int_0^1 \mu(\rho(t,.),x)\rho(t,x)dx- \lim_{x\rightarrow 1}\lambda(\rho(t,.),x)\rho(t,x)\leq 0\,.
	\end{equation}
	
	In the following, we consider a sequence $(\rho_0^\gge, \rho^\gge)$ of \modF{solutions} of system \eqref{eq:pde_rescaled} in the limit $\gge\to 0$.

	\subsection{\modF{Limit} model}
	
	Formally, setting $\gge=0$ in system \eqref{eq:pde_rescaled} leads to the following system:
	for all $t\geq 0$,
	\begin{equation}\label{eq:pde_limit}
	\left\lbrace \begin{array}{lll}
	\ds \partial_t \bar \rho_0(t) &\ds =& \ds - ( \lambda_0(\bar \rho(t,.)) + \mu_0(\bar \rho(t,.))) \bar \rho_0(t)\,, \\
	\ds \partial_x \big( \lambda(\bar \rho(t,.),x)\bar\rho(t,x) \big) &\ds =& \ds - \mu(\bar\rho(t,.)) \bar\rho(t,x)\,,\quad\mbox{for } x\in(0,1)\,,\\
	\ds \lim_{x\rightarrow 0} \lambda(\bar\rho(t,.),x)\bar\rho(t,x) &\ds =&\ds \lambda_0(\bar\rho(t,.)) \bar\rho_0(t)\,,
	\end{array}\right.
	\end{equation}
	with \modF{an} initial condition given by $\bar \rho_0(t=0)=\rho_0^{in}$, and  $\bar \rho(t=0,.)=\bar \rho^{in}$, that satisfies the second and third lines of Eq.~\eqref{eq:pde_limit} at $t=0$.
	System \eqref{eq:pde_limit} is not necessarily well-posed, as there may be several solutions $\bar \rho $ for a given $\bar \rho_0$. \modF{In the next two specific examples, we can prove that system (\ref{eq:pde_limit}) does admit a single positive solution, which is a natural} \modF{limit candidate} for the sequence $(\rho_0^\gge, \rho^\gge)$. 
	
	\begin{example}[Linear case]\label{ex:pde_limit_linear}
		Let us suppose that $K_{1,0}=K_{2,0}=0$ and $K_1\equiv 0$, $K_2\equiv 0$.  Assume furthermore that $f(0)>0$.  Then, system \eqref{eq:pde_limit} becomes linear, and has a unique solution given by
		\begin{equation}\label{eq:pde_limit_linear}
		\left\lbrace \begin{array}{lll}
		\ds\bar \rho_0(t) &\ds=&\ds \rho_0^{in}\exp\left(-(f_0+g_0)t\right)\\
		\ds \bar \rho(t,x) &\ds=& \ds \frac{f_0}{f(0)} \bar \rho_0(t) e^{-\int_0^x \frac{g(y)+ f'(y)}{f(y)}dy}\,,\quad\mbox{for } x\in(0,1)\,.
		\end{array}\right.
		\end{equation}
	\end{example}
	
	\begin{example}[Feedback \modF{onto} quiescent follicle activation and death rates]\label{example_4}
		Let \modF{us} suppose that $K_{1}\equiv 0$, $K_{2}\equiv 0$, and $K_{1,0}>0$ and $K_{2,0}\geq 0$. Assume furthermore that $f(0)>0$. Then, system \eqref{eq:pde_limit} \modR{with positivity requirement} $\bar{\rho}(t,x)\geq 0$ \modF{can be rewritten as}
		\begin{equation}\label{eq:pde_limit_K_1_0}
		\left\lbrace \begin{array}{lll}
		\ds\frac{d \bar \rho_0(t)}{dt} &\ds=&\ds -\big(\gl_0(\bar \rho(t,.)) + \mu_0(\bar \rho(t,.)) \big) \bar \rho_0(t) \\
		\ds \bar  \rho(t,0) &\ds=& \ds \frac{-f(0)+\sqrt{(f(0))^2 +4f_0\bar \rho_0(t)f(0)K_{1,0}\int_0^1 a(x)e^{\int_0^x \frac{g(y)+ f'(y)}{f(y)}dy}dx }}{2f(0)K_{1,0}\int_0^1 a(x)e^{\int_0^x \frac{g(y)+ f'(y)}{f(y)}dy}dx}\\
		\ds \bar \rho(t,x) &\ds=& \ds \bar \rho(t,0) e^{\int_0^x \frac{g(y)+ f'(y)}{f(y)}dy}\,,\quad\mbox{for } x\in(0,1)\,,
		\end{array}\right.
		\end{equation}
	\modF{which admits a unique solution.}
		Indeed, the functional expression of $\bar  \rho$ comes directly from solving the second line of Eq.~\eqref{eq:pde_limit}. Using the boundary condition in the third line of Eq.~\eqref{eq:pde_limit}, one can verify that $\bar  \rho(t,0)$ is the positive root of a polynomial of degree $2$, namely
		\[f_0\bar  \rho_0(t)=f(0) \bar\rho(t,0)\left(1+K_{1,0}\bar\rho(t,0)\int_0^1 a(x)e^{\int_0^x \frac{g(y)+ f'(y)}{f(y)}dy}dx\right)\]
	    \modF{The} limit system \eqref{eq:pde_limit_K_1_0} is \modF{a} nonlinear ODE. To simplify notations, we introduce $$
	H(x)=\displaystyle e^{-\displaystyle\int_0^x \frac{g(y)+ f'(y)}{f(y)}dy},\quad
	H_a=\displaystyle\int_0^1 a(x)H(x)dx, \quad H_b=\displaystyle\int_0^1 b(x)H(x)dx$$
	from which we rewrite \modF{system \eqref{eq:pde_limit_K_1_0}} as
	$$
	\left\lbrace \begin{array}{lll}
	\displaystyle\frac{d \bar \rho_0(t)}{dt} &\displaystyle=&\displaystyle -\big(\lambda_0(\bar \rho(t,.)) + \mu_0(\bar \rho(t,.)) \big) \bar \rho_0(t) \\
	\displaystyle\bar  \rho(t,0) 
	&\displaystyle=& \displaystyle \frac{-1+\sqrt{1 +4\bar \rho_0(t)K_{1,0}H_a\dfrac{f_0}{f(0)}}}{2K_{1,0}H_a}\\
	\displaystyle \bar \rho(t,x) &\displaystyle=& \displaystyle \bar \rho(t,0) H(x)\,,\quad\mbox{for } x\in(0,1)\,.\\
	\lambda_0(\rho(t,.))&\displaystyle=&\dfrac{f_0}{1+K_{1,0}\bar  \rho(t,0)H_a}\,.\\
	\mu_0(\rho(t,.))&\displaystyle=&g_0\left(1+K_{2,0}\bar  \rho(t,0)H_b\right)
	\end{array}\right.
	$$
	We can thus solve $\bar \rho_0(t)$ as the solution of an \modR{autonomous} nonlinear \modR{ODE}
	\begin{equation}\label{eq:pde_limit_ex4}
 \dfrac{d}{dt} 	\bar\rho_0(t)=G(\bar \rho_0(t)),\quad \bar \rho_0(0)=\rho_0^{ini}
	\end{equation}
	with 
	$$
	G({\color{red}{\rho}})=-\left[\dfrac{2f_0}{1+\displaystyle \sqrt{1 +4\dfrac{f_0}{f(0)}K_{1,0}H_a{\color{red}{\rho}}}}   + g_0\left(1+K_{2,0}\dfrac{-1+\sqrt{1+4\dfrac{f_0}{f(0)}K_{1,0}H_a  {\color{red}{\rho}}  }}{2K_{1,0}H_a}H_b\right)\right]{\color{red}{\rho}}
	$$
	from which we then compute $\bar  \rho(t,0)$ and eventually $\bar  \rho(t,x)$.
	
	\end{example}
	
	\subsection{Convergence in the linear case}\label{lin_rescaled}
	In this paragraph, we assume that $K_{1,0}=K_{2,0}=0$, $K_1\equiv 0$, $K_2\equiv 0$, and $f(0)>0$ as in Example \ref{ex:pde_limit_linear}. In such a case, one can solve explicitly system \eqref{eq:pde_rescaled} for each $\gge>0$ using the characteristics method. We obtain
	\begin{equation}\label{eq:pde_rescaled_solution_linear}
	\left\lbrace \begin{array}{lll}
	\ds \rho_0^{\gge}(t) &\ds=&\ds \rho_0^{in}\exp\left(-(f_0+g_0)t\right)\,, \\
	\ds  \rho^{\gge}(t,x)  &\ds=& \ds  \begin{cases}
	e^{-\int^x_{X(0;t,x)} \frac{g(y)+f'(y)}{f(y)}dy} \rho^{ini}(X(0;t,x))\,,\quad\mbox{if } t \leq \int^x_0\frac{\gge}{f(y)}dy\,,\\
	\frac{f_0}{f(0)} \rho_0^{in} \, e^{-(f_0+g_0)\,(t - \int^x_0\frac{\gge}{f(y)}dy)} \, e^{-\int^x_0 \frac{g(y)+f'(y)}{f(y)}dy}\,,\quad\mbox{if } t > \int^x_0\frac{\gge}{f(y)}dy\,.
	\end{cases}
	\end{array}\right.
	\end{equation}
	where $X(0;t,x)$ is the location of the characteristic at time $0$, given that it goes through the point $x$ at time $t$, namely:
	\begin{equation}\label{eq:charact}
	\frac{d}{ds}X(s;t,x)= f\left(X(s;t,x)\right)\,,\quad X(t;t,x)=x\,.
	\end{equation}
	It is thus clear that $\rho_0^{\gge}=\bar {\rho}_0$ is a constant sequence \modF{in $\varepsilon$}. \modR{For the population density $\rho^{\gge}$}, which is here the fast \modF{unknown}, we prove the following
	\begin{prop}\label{prop:pde_cv_linear}Assume that $K_{1,0}=K_{2,0}=0$, $K_1\equiv 0$, $K_2\equiv 0$, and $f(0)>0$ with $\int_0^1 \frac{1}{f(x)}dx<\infty$ and $\int^1_0 \frac{g(x)+f'(x)}{f(x)}dx<\infty$. Then, for all $\eta>0$, we have
		\begin{equation}
		\lim_{\gge \to 0} \sup_{t  >\eta} \sup_{x \in (0,1)} \vert \rho^{\gge}(t,x) -\bar \rho(t,x) \vert = 0
		\end{equation}	
		where $\bar \rho$ is given in Eq.~\eqref{eq:pde_limit_linear}.
	\end{prop}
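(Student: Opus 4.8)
The plan is to exploit the fully explicit formula \eqref{eq:pde_rescaled_solution_linear} for $\rho^{\gge}$, exactly as the linear ODE case was handled through \eqref{eq:ode_rescaled_solution_linear}. The central object is the rescaled transport time $\tau_\gge(x) := \gge\int_0^x \frac{dy}{f(y)}$, i.e.\ the $\gge$-dependent time a characteristic \eqref{eq:charact} needs to travel from $0$ to $x$. The hypothesis $\int_0^1 \frac{dy}{f(y)} < \infty$ gives the uniform bound $0 \le \tau_\gge(x) \le \gge C_f$ for all $x \in (0,1)$, where $C_f := \int_0^1 \frac{dy}{f(y)}$.

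First I would dispose of the initial/boundary layer. The first branch of \eqref{eq:pde_rescaled_solution_linear}, which carries the initial datum $\rho^{ini}$, is active only when $t \le \tau_\gge(x)$. Since $\tau_\gge(x) \le \gge C_f$, as soon as $\gge < \eta/C_f$ we have $\tau_\gge(x) < \eta < t$ for every $x \in (0,1)$ and every $t > \eta$; hence on the whole region $\{t > \eta\}\times(0,1)$ only the second branch is active. This is the PDE counterpart of the remark that convergence cannot be expected up to $t = 0$ unless the data are well prepared (cf.\ Remark \ref{rem:ode2}): here the layer has width $O(\gge)$ in time and is pushed outside $\{t > \eta\}$.

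On the second branch the only difference between $\rho^{\gge}$ and the limit profile \eqref{eq:pde_limit_linear} is the shift $t \mapsto t - \tau_\gge(x)$ in the decaying exponential, so the two solutions are related by the clean identity $\rho^{\gge}(t,x) = \bar\rho(t,x)\, e^{(f_0+g_0)\tau_\gge(x)}$. Subtracting gives $|\rho^{\gge}(t,x) - \bar\rho(t,x)| = \bar\rho(t,x)\bigl(e^{(f_0+g_0)\tau_\gge(x)} - 1\bigr)$, and it remains to bound each factor uniformly. Writing $\bar\rho(t,x) = \frac{f_0}{f(0)}\rho_0^{in}\,e^{-(f_0+g_0)t}\,H(x)$ with $H(x) = e^{-\int_0^x \frac{g(y)+f'(y)}{f(y)}\,dy}$, I use $e^{-(f_0+g_0)t} \le 1$ and the second hypothesis $\int_0^1 \frac{g(y)+f'(y)}{f(y)}\,dy < \infty$, which ensures $\sup_{x\in[0,1]} H(x) =: \bar H < \infty$. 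Combining with $\tau_\gge(x) \le \gge C_f$ then yields
\[ \sup_{t>\eta}\ \sup_{x\in(0,1)}\ |\rho^{\gge}(t,x) - \bar\rho(t,x)| \ \le\ \frac{f_0}{f(0)}\,\rho_0^{in}\,\bar H\,\bigl(e^{(f_0+g_0)\gge C_f} - 1\bigr), \]
which tends to $0$ as $\gge \to 0$, proving the claim.

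I expect the only genuinely delicate point to be the uniform control of the prefactor $H(x)$ and of $\tau_\gge(x)$ near the endpoints, where $f$ may degenerate (typically $f(1)=0$, mirroring $\gl_d = 0$ in the compartmental model): this is exactly what the two integrability assumptions $\int_0^1 \frac{1}{f} < \infty$ and $\int_0^1 \frac{g+f'}{f} < \infty$ are designed to guarantee. Everything else is a direct computation on the explicit solution, with no compactness argument needed---indeed, as in Remark \ref{rem:ode2}, $\partial_t \rho^{\gge}$ is not uniformly bounded, so an Ascoli--Arzel\`a approach would fail here too.
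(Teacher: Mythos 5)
Your proof is correct and follows essentially the same route as the paper's: for $\gge$ small enough the initial-layer branch of the explicit solution \eqref{eq:pde_rescaled_solution_linear} is inactive on $\{t>\eta\}$, and the remaining branch differs from $\bar\rho$ only by the factor $e^{(f_0+g_0)\gge\int_0^x dy/f(y)}$, which is controlled uniformly by the two integrability hypotheses, yielding exactly the paper's final bound $\frac{f_0}{f(0)}\rho_0^{in}\bigl(\sup_{x}e^{-\int_0^x\frac{g+f'}{f}}\bigr)\bigl(e^{(f_0+g_0)\gge\int_0^1\frac{dy}{f(y)}}-1\bigr)\to 0$.
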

	\begin{proof}
		It is clear that for any $\eta>0$, there exists $\gge'$ such that for all $\gge<\gge'$, and all $t>\eta$ we have
		\[t> \int^x_0\frac{\gge}{f(y)}dy \]
		Then, comparing the solutions of Eq.~\eqref{eq:pde_rescaled_solution_linear} and  Eq.~\eqref{eq:pde_limit_linear} allows us to conclude that, for all $t>\eta$,
		\[  \sup_{t  >\eta} \sup_{x \in (0,1)} \vert \rho^{\gge}(t,x) -\bar \rho(t,x) \vert \leq  \frac{f_0}{f(0)} \rho_0^{in}\left(\sup_{x\in(0,1)}e^{-\int^x_0 \frac{g(y)+f'(y)}{f(y)}dy}\right) \left(e^{(f_0+g_0)\gge\int^1_0\frac{1}{f(y)}dy} -1 \right)\to 0\,,\quad as\quad \gge \to 0\,.\]
	\end{proof}	
	\begin{rem}\label{rem:pde}
		As in Remark \ref{rem:ode}, we can see that during a time of order $\gge t$, we cannot get \modF{the} convergence of the rescaled model towards the reduced one, \modF{which precludes} uniform convergence in time starting from $\modF{t=}0$.
	\end{rem}

	\subsection{Numerical study}
	
	In this paragraph, we detail the numerical schemes that we have designed to solve both systems \eqref{eq:pde_rescaled} and \eqref{eq:pde_limit}, and illustrate the consistency and convergence of these algorithms using the exact solutions.
	
	\subsubsection{Numerical scheme for the \modF{limit} model}\label{ssec:algo_limit_model}
	\modF{We} design a finite difference scheme to compute a numerical solution \modF{to} the PDE limit system~\eqref{eq:pde_limit}.
	This system is nonlinear due to the dependence of $\lambda_0$, $\lambda$ and $\mu$ upon the solution   $\bar\rho(t,x)$, which itself depends on $\bar \rho_0$. We propose to treat this \modF{nonlinearity} with a fixed point scheme.  
	At each time step $t_n=n\Delta_t$, for $n=0,\ldots,N$ with  $T=N\Delta_t$ we build a sequence $\rho^\ell(t_n,x)$  such that $$\lim_{\ell\rightarrow\infty}\rho^\ell(t_n,x) =\bar\rho(t_n,x).$$
	Let $x_k=k\Delta_x$,  for $k=0,\ldots,M$, with $M\Delta_x=1$. We \modF{introduce} the discretized approximations  $$\rho^{n,\ell}_k\approx\rho^\ell(t_n,x_k),\quad \bar\rho^{n}_k\approx\lim_{\ell\rightarrow\infty}\rho^\ell(t_n,x_k)\quad\mbox{and}\quad \bar{\bar\rho}_0^{n}\approx \bar\rho_0(t_n)\,,$$
	which we compute as follows. Let $\eta\ll1$.
	\begin{enumerate}
		\item Initialization.
		$$\rho^{0,0}_k=\bar \rho^{in}(x_k)\quad\mbox{and }\quad \bar{\bar\rho}_0^{0}=\rho^{in}_0.$$
		\item For $n=0\nearrow N$ compute  $\bar\rho_k^{n}=\lim_{\ell\rightarrow\infty}\rho_k^{\ell,n}$ iteratively then update $\bar{\bar\rho}_0^{n+1}$ as follows.
		\begin{enumerate}
			\item Initialize residual $R_\ell^n=1$ and set  $\ell=0$
			\item While $R_\ell^n>\eta$ do 
			\begin{itemize}
				\item Compute \modR{the PDE parameters}  $\bar\lambda_0^{n,\ell}$, $\lambda_k^{n,\ell}$ and $\mu_k^{n,\ell}$ (for $k=0,\ldots,M$) \modR{by standard trapezoidal rules} 
				\begin{equation*}
				\begin{array}{rcl}
				\displaystyle \bar\lambda_0^{n,\ell} & \displaystyle = & \displaystyle \frac{f_0}{1+K_{1,0}\Delta_x\left(\frac{1}{2}a(0)\rho_0^{n,\ell}+\sum_{j=1}^{M-1}a(x_j)\rho_j^{n,\ell}+\frac{1}{2}a(x_M)\rho_M^{n,\ell}\right)}\\
				\displaystyle \lambda_k^{n,\ell} & \displaystyle = & \displaystyle \frac{f(x_k)}{1+K_1(x_k)\Delta_x\left(\frac{1}{2}\omega_1(0)\rho_0^{n,\ell}+\sum_{j=1}^{M-1}\omega_1(x_j)\rho_j^{n,\ell}+\frac{1}{2}\omega_1(x_M)\rho_M^{n,\ell}\right)}\\
				\displaystyle \mu_k^{n,\ell} & \displaystyle = & \displaystyle g(x_k)\left(1+K_2(x_k)\Delta_x\left(\frac{1}{2}\omega_2(0)\rho_0^{n,\ell}+\sum_{j=1}^{M-1}\omega_2(x_j)\rho_j^{n,\ell}+\frac{1}{2}\omega_2(x_M)\rho_M^{n,\ell}\right)\right)
				\end{array}
				\end{equation*}
				\item \modM{Enforce} \modF{boundary} condition at $x=0$ 
				$$\lambda_0^{n,\ell}\rho_{0}^{n,\ell+1}=\bar\lambda_0^{n,\ell}\bar{\bar\rho}_0^{n}$$
				\item \modF{Integrate numerically the PDE} in $x$ 
				$$\lambda_k^{n,\ell}\left(\rho_{k+1}^{n,\ell+1}-\rho_k^{n,\ell+1}\right)=-\Delta_x\mu_k^{n,\ell}\rho_k^{n,\ell+1},\quad k=0,\ldots,M-1$$
				\item Compute residual between $\ell$ and $\ell+1$ iterations
				$$ R^n_\ell=\max_{k=0,\ldots,M}|\rho_k^{n,\ell+1}-\rho_k^{n,\ell}|$$
			\end{itemize}
			\item Set \modR{the new population density equal to the value obtained at the end of the fixed point procedure}
			$\bar\rho_k^{n}=\rho_k^{n,\ell+1}$ 
			\item Compute \modR{the ODE parameters from the fixed point value, using standard trapezoidal rules}
			\begin{equation*}
				\begin{array}{rcl}
				\displaystyle \bar{\bar\lambda}_0^{n} & \displaystyle = & \displaystyle \frac{f_0}{1+K_{1,0}\Delta_x\left(\frac{1}{2}a(0)\bar\rho_0^{n}+\sum_{j=1}^{M-1}a(x_j)\bar\rho_j^{n}+\frac{1}{2}a(x_M)\bar\rho_M^{n}\right)}\\
				\displaystyle \bar{\bar\mu}_0^{n} & \displaystyle = & \displaystyle g_0\left(1+K_{2,0}\Delta_x\left(\frac{1}{2}b(0)\bar\rho_0^{n}+\sum_{j=1}^{M-1}b(x_j)\bar\rho_j^{n}+\frac{1}{2}b(x_M)\bar\rho_M^{n}\right)\right)
			    \end{array}
			\end{equation*}
			\item \modF{Integrate numerically the ODE} $\bar{\bar\rho}_0$  between $t_n$ and $t_{n+1}$ \modR{with a classic explicit Euler scheme}
			$$\bar{\bar\rho}_0^{n+1}=\bar{\bar\rho}_0^{n}-\Delta_t \left(\bar{\bar\lambda}_0^{n}+\bar{\bar\mu}_0^{n}\right)\bar{\bar\rho}_0^{n}.$$
		\end{enumerate}
	\end{enumerate}
	
	\subsubsection{Convergence of the numerical scheme for the \modF{limit} model.}\label{test_case_red}
	\begin{figure}[h]
		\includegraphics[width=\textwidth]{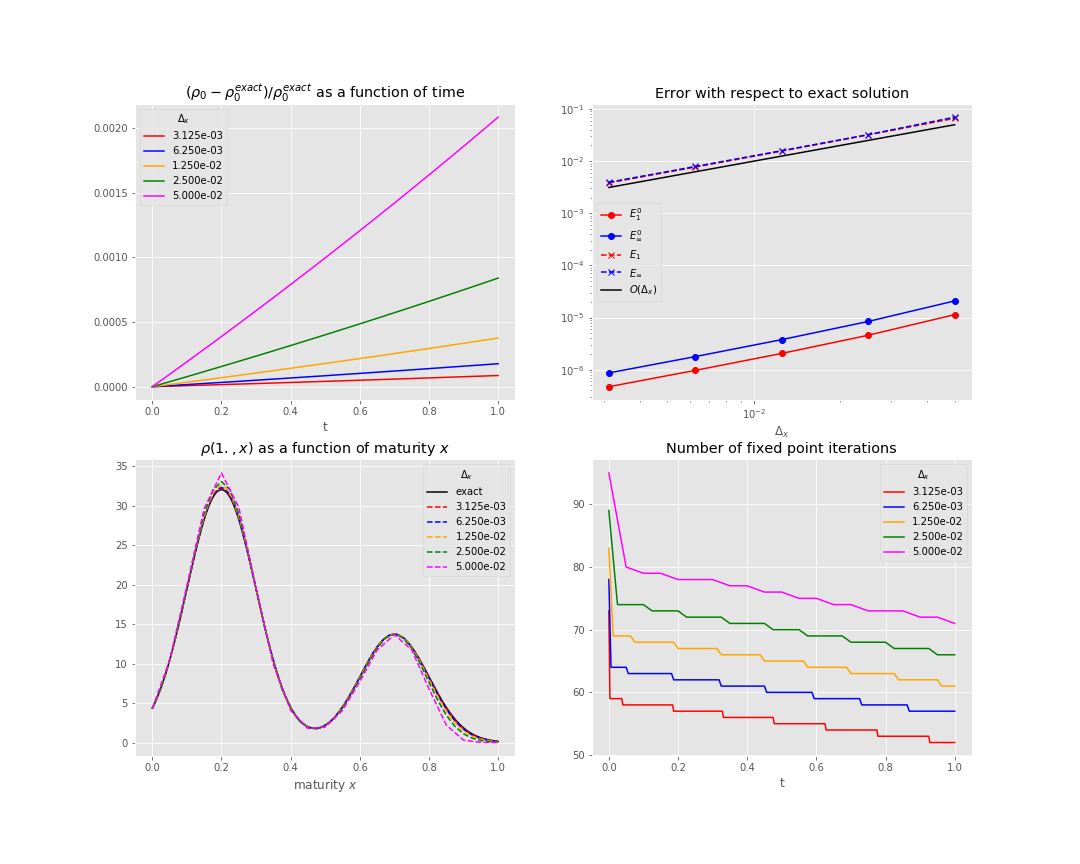}
		\caption{Convergence of the numerical scheme for the \modF{limit} model as a function of $\Delta_x=\Delta_t$. Top left \modF{panel:} relative error of $\bar {\bar\rho}_0^n\approx\rho_0(t_n)$, computed by means of the fixed point algorithm, compared the numerical solution of \eqref{eq:pde_limit_ex4}, computed with the odeint python ODE solver. \modF{Bottom left panel:} comparison of $\bar\rho_k^{N}=\bar \rho(t=1,x_k)$ and the numerical solution of \eqref{eq:pde_limit_K_1_0}. Top right panel: \modM{relative} errors with respect to the pseudo exact solution of \eqref{eq:pde_limit_K_1_0}. \modF{Bottom right panel}: number of iterations in the fixed point algorithm. 
		}
		\label{fig:red_mod}
	\end{figure}
	\modRR{For the nonlinear scenario described in Example \ref{example_4} we obtain a ``pseudo exact'' solution using scipy library ODE solver odeint to solve Eq.~\eqref{eq:pde_limit_ex4} and compute $\bar\rho_0(t)$, and we use Eq.~\eqref{eq:pde_limit_K_1_0} to compute $\bar\rho(t,\cdot)$. We use this reference solution to assess the performances of the numerical scheme described in paragraph \ref{ssec:algo_limit_model}.}
\modMM{We display on Figure \ref{fig:red_mod} the simulations performed with the parameter values detailed in the Appendix} (section \ref{sec:annex}),  on a time horizon $t\in(0,1)$, and an increasingly fine discretization $$\Delta_x\in\{0.05,0.025,0.0125,0.00625,0.003125\}\,.$$
	The top left panel \modF{shows} the difference between the pseudo exact solution  $\bar \rho_0(t_n)$ of Eq.~\eqref{eq:pde_limit_ex4} 
	and the numerical \modF{solutions} $\bar{\bar\rho}^n$ computed with the fixed point algorithm, as a function of $t\in[0,T]$. \modF{The} bottom left panel \modF{shows} the pseudo exact solution $\bar \rho(T,x_k)$ and the numerical solutions $\bar\rho^{N}_k$ at final time $T=1$, as a function of $x\in [0,1]$. In the top right panel, we display the relative errors in $L^1$ and $L^\infty$ \modF{norms} between the pseudo exact and the numerical solution $\bar\rho_0(t)$, as a function of $\Delta_t=\Delta_x=1/N$:
	$$E^0_1(\Delta_x)=\displaystyle \frac{\sum_{n=0}^{N}\valabs{\bar \rho_0(t_n)-\bar{\bar\rho}^n}}{\sum_{n=0}^{N}\valabs{\bar \rho_0(t_n)}},$$
	$$E^0_\infty(\Delta_x)=\displaystyle \frac{\max_{n=0,\ldots,N}\valabs{\bar \rho_0(t_n)-\bar{\bar\rho}^n}
	}{\max_{n=0,\ldots,N}\valabs{\bar \rho_0(t_n)}},
	$$
	and the relative errors between the pseudo exact and the numerical solutions $\bar \rho(T,x)$
	$$E_1(\Delta_x)=\displaystyle \frac{\sum_{k=0}^{M}\valabs{\bar \rho(T,x_k)-\bar\rho^{N}_k}
	}{\sum_{k=0}^{M}\valabs{\bar \rho(T,x_k)}},
	$$
	$$E_\infty(\Delta_x)=\displaystyle \frac{\max_{k=0,\ldots,M}\valabs{\bar \rho(T,x_k)-\bar\rho^{N}_k}}{
	\max_{k=0,\ldots,M}\valabs{\bar \rho(T,x_k)}}.
	$$
	As expected, the errors are linear in $\Delta_x$, which means that the \modF{order of numerical convergence} is one.
 In the bottom right panel, we display the number of iterations \modF{performed} in the inner loop of \modF{the} fixed point algorithm, as a function of time. The number of iterations \modF{needed} to converge decreases with time and the number of time points. \modF{This tendency is not really surprising, since,} at each time step, we start with the solution \modF{obtained} at the previous time step \modF{as} initial condition \modM{for the fixed point loop}. Since the solution decreases with time, the distance between the \modM{fixed point} initial condition and the solution decreases with  \modF{both the time and time step, hence} convergence requires less iterations. 
 
	\subsubsection{Numerical scheme for the rescaled model}
	\modF{We} design an explicit finite volume scheme to compute a numerical solution \modF{to} the rescaled model (Eq. \ref{eq:pde_rescaled}).
	The discretized unknowns are at each time step $t_n=n\Delta_t$, for $n=0,\ldots,N$ with  $T=N\Delta_t$ 
	$${\bar\rho}_0^{\varepsilon,n}\approx \rho_0^\varepsilon(t_n)\,,$$
	and, for $x_k=k\Delta_x$, $k=0,\ldots,M$, with $M\Delta_x=1$, 
	$$\rho^{\varepsilon,n}_k\approx\int_{x_k}^{x_{k+1}}\rho^\varepsilon(t_n,x)dx,\quad k=0,\ldots,M-1\,.$$
	
	\modMM{We integrate numerically the PDE between $t_n$ and $t_{n+1}$ and over $[{x_k},{x_{k+1}}]$ by freezing the nonlinear coefficients  $\lambda(\bar\rho(t,.),x)$ and $\mu(\bar\rho(t,.))$ at time  $t_n$}
		\begin{equation}\label{eq:explicit_nosplit}\rho_{k}^{\varepsilon,n+1}-\rho_k^{\varepsilon,n}=-
			\frac{\Delta_t^n}{\varepsilon\Delta_x}\left(\lambda_k^{\varepsilon,n}\rho_{k}^{\varepsilon,n}-\lambda_{k-1}^{\varepsilon,n}\rho_{k-1}^{\varepsilon,n}\right)-\frac{\Delta_t^n}{\varepsilon}\mu_k^{\varepsilon,n}\rho_k^{\varepsilon,n},\quad k=1,\ldots,M\,.\end{equation}
	
	At each time step, we compute both the PDE and ODE coefficients using the midpoint rule and the \modMM{numerical} solution $(\rho^{\varepsilon,n}_k)_{k=0,\ldots,M}$ \modMM{as a piecewise constant solution}
	\begin{eqnarray}\label{coeff_eps}\left\{\begin{array}{l}
	\lambda_k^{\varepsilon,n}=\displaystyle\frac{f(x_{k+1/2})}{1+K_1(x_{k+1/2})\Delta_x\sum_{j=0}^{M-1}\omega_1(x_{j+1/2})\rho_j^{\varepsilon,n}},\\
	\bar\lambda_0^{\varepsilon,n}=\displaystyle\frac{f_0}{1+K_{1,0}\Delta_x\sum_{j=0}^{M-1}a(x_{j+1/2})\rho_j^{\varepsilon,n}},\\
	\mu_k^{\varepsilon,n}=g(x_{k+1/2})\left(1+K_2(x_{k+1/2})\Delta_x\sum_{j=0}^{M-1}\omega_2(x_{j+1/2})\rho_j^{\varepsilon,n}\right),\\
	\bar\mu_0^{\varepsilon,n}=g_0\left({1+K_{2,0}\Delta_x\sum_{j=0}^{M-1}b(x_{j+1/2})\rho_j^{\varepsilon,n}}\right).\end{array}\right.\end{eqnarray}
	
	\modMM{For the explicit scheme \eqref{eq:explicit_nosplit}, two stability conditions must be \modF{satisfied}} 
	\begin{itemize}\item CFL-like stability condition:
		$$\frac{\Delta_t^n}{\varepsilon\Delta_x}\lambda_{k}^{\varepsilon,n}\leq C_{cfl}<1,\quad k=1,\ldots,M\,,$$
		which can be rewritten as
		\begin{equation}\label{eq:stab_pde_1}
		\frac{\Delta_t^n}{\varepsilon\Delta_x}\leq \displaystyle\frac{C_{cfl}}{\max_k \lambda_{k}^{\varepsilon,n}}\,.
		\end{equation}
		\item Positivity conservation condition:
		$1+\frac{\Delta_t^n}{\varepsilon}\left(\frac{\lambda_{k}^{\varepsilon,n}}{\Delta_x}-\mu_{k}^{\varepsilon,n}\right)\geq 0,\quad k=1,\ldots,M$
		we impose that if  $\rho_i^{\varepsilon,n}=\delta_{ik}$ then $\rho_i^{\varepsilon,n+1}\geq 0$ for all $k$ and all $i$ which leads to
		$$1-\frac{\Delta_t^n}{\varepsilon}\left(\frac{\lambda_{k}^{\varepsilon,n}}{\Delta_x}+\mu_{k}^{\varepsilon,n}\right)\geq 0,\quad k=1,\ldots,M\,,$$
        which can be rewritten as
		\begin{equation}\label{eq:stab_pde_2}
		\frac{\Delta_t^n}{\varepsilon\Delta_x}\leq\displaystyle\frac{1}{\max_{k}\left(\lambda_{k}^{\varepsilon,n}+\Delta_x \mu_{k}^{\varepsilon,n}\right)}\,.
		\end{equation}
		\end{itemize}

	\modMM{The overall numerical scheme proceeds as follows:}
	\begin{enumerate}
		\item Initialization :
		$$\rho^{\varepsilon,0}_k=\rho^{in}(x_k)\quad\mbox{and }\quad \bar \rho_0^{\varepsilon,0}=\rho^{in}_0.$$
		\item For $n=0\nearrow N$ compute  $\rho_k^{\varepsilon,n+1}$ for $k=1,\ldots,M$ then update $\bar\rho_0^{\varepsilon,n+1}$ \modR{and finally compute} $\rho_0^{\varepsilon,n+1}$ :
		\begin{enumerate}
			\item Compute the PDE and ODE coefficients $\bar\lambda_0^{\varepsilon,n}$, $\bar\mu_0^{\varepsilon,n}$,  $\lambda_k^{\varepsilon,n}$ and $\mu_k^{\varepsilon,n}$ (Eq. \ref{coeff_eps})
			\item Compute  $\Delta_t^n$ satisfying stability conditions \eqref{eq:stab_pde_1}  and  \eqref{eq:stab_pde_2}
			\item \modF{Integrate numerically the PDE} \modMM{in $x$ at time  $t_n$ using \eqref{eq:explicit_nosplit}}
			\item \modF{Integrate numerically the ODE} ${\bar\rho}_0$  between $t_n$ and $t_{n+1}$
			$${\bar\rho}_0^{\varepsilon,n+1}={\bar\rho}_0^{\varepsilon,n}-\Delta_t^n\displaystyle\left({\bar\lambda}_0^{\varepsilon,n}+{\bar\mu}_0^{\varepsilon,n}\right).$$
			\item \modM{Enforce} \modF{the boundary} condition  at $x=0$ 
			$$\lambda_0^{\varepsilon,n+1}\rho_{0}^{\varepsilon,n+1}=\bar\lambda_0^{\varepsilon,n+1}\bar\rho_0^{\varepsilon,n+1}\,.$$
		\end{enumerate}
	\end{enumerate}

\subsubsection{Convergence of the numerical scheme for the rescaled model.}
We start by checking the convergence of the numerical scheme in a case where we know an exact solution, that is the linear case $K_{10}=K_{20}=K_1=K_2=0$. We test \modF{several} discretizations \modM{$$\Delta_x\in\{0.1,0.05,0.025,0.0125,0.00625,0.003125\},$$}
for \modFF{the parameter values of the linear} scenario detailed in \modF{the Appendix} (section \ref{sec:annex}).
The results are displayed in Figure \ref{fig:rescaled_mod_xt_conv}. In the left panel, we see that $\rho_0(t)$ is computed exactly since all curves corresponding to different discretizations are superimposed \modR{(as expected in the linear case)}. In the center panel, we display the solution at a fixed time $T$ as a function of $x$, which does depend on the discretization. The right panel shows the relative error curves, which exhibit a \modR{convergence rate} better than linear.

\begin{figure}[h]
		\includegraphics[width=\textwidth]{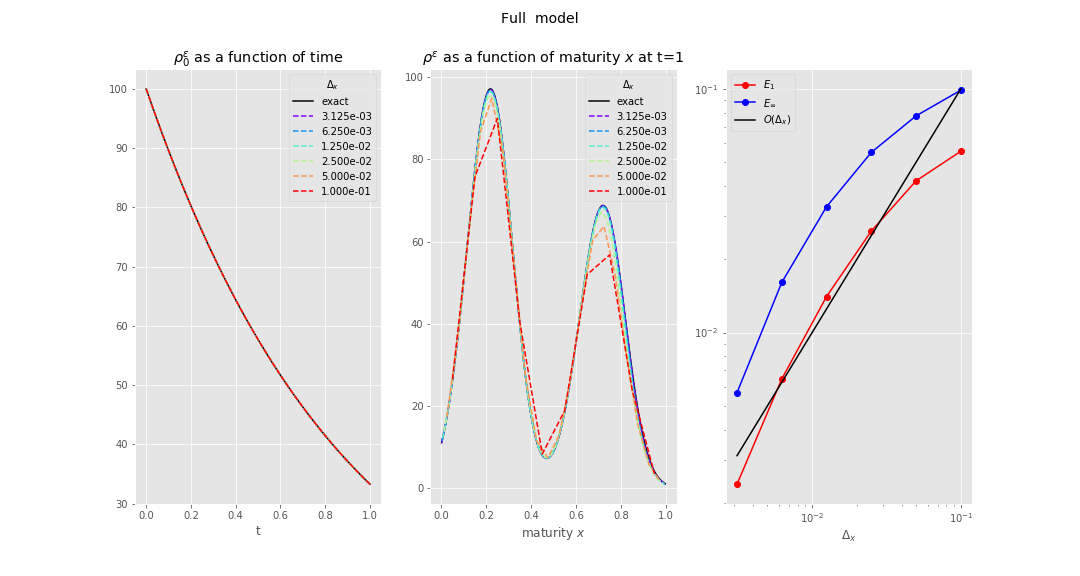}
		\caption{Convergence of the numerical scheme for the rescaled model as a function of $\Delta_x$, for $\varepsilon=0.5$.
			Left panel: $\rho_0^{\varepsilon}$ as a function of time (exact and finite volume scheme),  center panel: $\rho^{\varepsilon}$ as a function of $x$ at final time $t=1$, right panel: $L_\infty$ and $L_1$ relative errors with respect to \modF{the} exact solution. 
			}
		\label{fig:rescaled_mod_xt_conv}
	\end{figure}

	\subsubsection{$\varepsilon$-convergence towards \modF{the limit} model}
	So far we have proved the convergence of the rescaled model towards the limit model when $\varepsilon\to 0$ in the linear case. We can only test it numerically in the general case. To minimize the numerical error arising from solving the limit model numerically, \modF{we} illustrate the $\varepsilon$-convergence of $(\rho_0^\gge, \rho^\gge)$ towards $(\bar \rho_0, \bar \rho)$ in the nonlinear scenario of example \ref{example_4} (see details \modF{on the} parameter values in \modF{the Appendix} (section \ref{sec:annex})), \modF{for which} the  pseudo-exact solution of the limit model is available. As in Figure \ref{fig:red_mod}, \modMM{pseudo exact solutions} $(\bar \rho_0, \bar \rho)$ are simulated using Eq.~\eqref{eq:pde_limit_ex4} and Eq.~\eqref{eq:pde_limit_K_1_0}. The results are displayed in Figure \ref{fig:rescaled_mod_eps_conv} for a set of \modF{$\varepsilon$ values} and two discretizations $M=N=100$  and $M=N=200$. The agreement of $\rho_0(t)$ with the \modF{limit} model solution \modF{(top left panel)}, and that of $\rho^\varepsilon(t,x)$ \modF{(bottom left panel)}, are qualitatively good as soon as $\varepsilon\leq 10^{-2}$. The relative error \modF{for} $\rho_0(t)$ \modF{(top right panel) exhibits} a linear behavior in $\varepsilon$. The error curves for the convergence of the solution $\rho^\varepsilon(t,x)$ in the domain \modF{(bottom right panel)} are not linear, and \modF{remain beyond a} threshold when $\varepsilon$ goes to 0.  However, \modF{both} the value of $\varepsilon$ \modF{for which} the error \modF{approaches} this threshold, and the value of the error itself decrease when we refine the discretization. This indicates that
    we should refine the discretization when we decrease $\varepsilon$. Since our current numerical scheme is explicit in time, \modF{any} refinement must be simultaneous in $\Delta_x$ and $\Delta_t$, \modF{and as a consequence}  the cost in CPU time \modF{depends} quadratically in $\varepsilon^{-1}$.
	 
	 In practice, realistic values for $\varepsilon$ should remain tractable. However this behavior is a good incentive to study a more economical numerical scheme, namely an implicit one, which would provide accurate results with coarser discretizations. 
	
	\begin{figure}[h]
	\includegraphics[width=\textwidth]{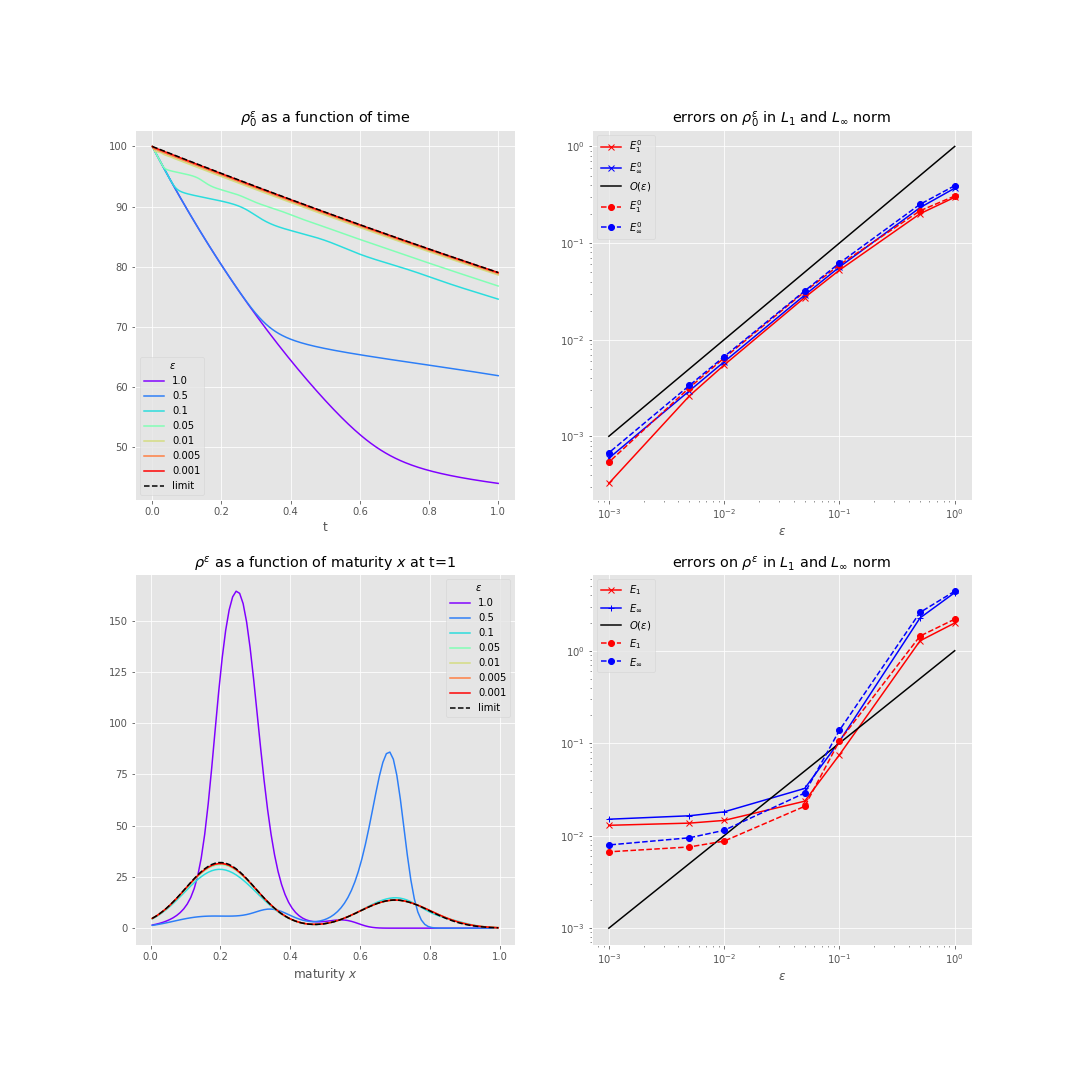}
		\caption{Numerical assessment of the $\gge$-convergence of the rescaled model towards the limit model.
			\modF{Top} left panel: $\bar\rho_0^{\varepsilon,n}$ and $\bar \rho_0$.  \modF{Bottom} left panel: $\rho_k^{\varepsilon,N}$ and $\bar\rho(1,\cdot)$. \modF{Top} right panel: relative error in $L_1$ and $L_\infty$ \modF{norms} between $\rho_0^{\gge}(t)$ and $\bar \rho_0$. \modF{Bottom} right panel: relative error in $L_1$ and $L_\infty$ \modF{norms} between $\rho^\varepsilon(1,\cdot)$ and $\bar\rho(1,\cdot)$. In both right panels, solid lines correspond to solutions computed with $M=N=100$ and \modF{dashed} lines to solutions computed with $M=N=200$. %
			}
		\label{fig:rescaled_mod_eps_conv}
	\end{figure}

 \section{SDE model}

\modF{We now turn to a stochastic model for the follicle population dynamics}. We \modF{skip} the rescaling procedure, which follows an analogous reasoning as that detailed in section~\ref{sec:ode}, and present directly the rescaled model.

\subsection{Rescaled model}

We consider the following coupled \modF{Poisson}-driven SDE system, given by, for all $t\geq 0$,

\begin{equation}\label{eq:sde_rescaled}
\left\lbrace \begin{array}{lll}
\ds X_0^\varepsilon(t)&\ds =&\ds X_0^{\varepsilon,in}-\varepsilon P_0^+\left(\int_0^t\dfrac{\lambda_0(X^\varepsilon(s))}{\varepsilon}X_0^\varepsilon(s)ds\right) \\
& &\ds -\varepsilon P_0^-\left(\int_0^t\dfrac{\mu_0(X^\varepsilon(s))}{\varepsilon}X_0^\varepsilon(s)ds\right)\\
\ds X_i^\varepsilon(t)&\ds =&\ds X_i^{\varepsilon,in}+ P_{i-1}^+\left(\int_0^t\dfrac{\lambda_{i-1}(X^\varepsilon(s))}{\varepsilon}X_{i-1}^\varepsilon(s)ds\right)\\
&&\ds - P_{i}^+\left(\int_0^t\dfrac{\lambda_i(X^\varepsilon(s))}{\varepsilon}X_i^\varepsilon(s)ds\right)\\
&&\ds - P_i^-\left(\int_0^t\dfrac{\mu_i(X^\varepsilon(s))}{\varepsilon}X_i^\varepsilon(s)ds\right)\,,\quad i \in \{1,\dots,d\} \\
\end{array}\right.
\end{equation}
where $X^\gge$ represents the vectorial process $(X_0^\varepsilon,\cdots,X_d^\varepsilon)$ of the follicle number in each maturity stage, the functions $\gl_i,\mu_i$, for $i\in \{0,\cdots,d\}$ are given by Eq.~\eqref{eq:lambdai_mui_ode} and \eqref{eq:lambda0_mu0_ode_rescaled} and $\left(P_i^+,P_i^-\right)_{i\in\{0, \cdots, d\}}$ are independent standard Poisson processes. In Eq.~\eqref{eq:sde_rescaled}, $X_0^{\varepsilon,in}$ is \modF{an} $\gge \mathbb{N}$-valued random variable, and each $X_i^{\varepsilon,in}$, $i\in \{1,\dots,d\}$, is a $\mathbb{N}$-valued random variable.

\modF{SDE (\ref{eq:sde_rescaled})} defines uniquely (in law) a continuous time Markov chain in $\gge\N\times\N^d$. We note the following conservation law,

\begin{equation}\label{eq:sde_rescaled_conservation}
X_0^\varepsilon(t) + \gge  \sum_{i=1}^d   X_i^\varepsilon(t)= X_0^\varepsilon(0) + \gge  \sum_{i=1}^d   X_i^\varepsilon(0) -\gge \sum_{i=0}^d P_i^-\left(\int_0^t\dfrac{\mu_i(X^\varepsilon(s))}{\varepsilon}X_i^\varepsilon(s)ds\right)\,.
\end{equation}

In the following, we consider a sequence $X^\gge$ of \modF{solutions to system} \eqref{eq:sde_rescaled} in the limit $\gge$ tends to $0$.

\subsection{Limit model}

Formally, setting $\gge=0$ in system \eqref{eq:sde_rescaled} \modF{leads to} the following system for $(\bar x_0,\bar f_t)$, \modF{coupling the dynamics of a deterministic continuous function on $\mathbb{R}_+$, $\bar x_0$, with those of a time-dependent measure on $\mathbb{N}^d$, $\bar f_t$, } for all $t\geq 0$,
\begin{equation}\label{eq:sde_limit}
\left\lbrace \begin{array}{lll}
\ds \frac{d}{dt} \bar x_0(t) &\ds =& \ds - ( \bar \lambda_0 + \bar \mu_0) \bar x_0(t), \quad \bar x_0(0)=x^{in}_0\\
\ds \bar \lambda_0 &\ds =& \ds \sum_{x\in \N^d} \gl_0(x)\bar f_t(x),\\
\ds \bar \mu_0 &\ds =& \ds \sum_{x\in \N^d} \mu_0(x)\bar f_t(x),\\
\ds 0 &\ds =&\ds \sum_{x\in \N^d}\bar A_{\bar x_0(t)}\psi(x)\bar f_t(x), \quad \forall \psi \in B(\N^d).
\end{array}\right.
\end{equation}
\modF{In system \eqref{eq:sde_limit}, } $x^{in}_0$ is a real positive constant, and, for any $x_0>0$, $\bar A_{x_0}$ is an operator, defined, for all bounded \modF{functions} $\psi$ on $\N^d$ and for all $x=(x_1,\cdots,x_d)\in \N^d$, by

\begin{multline}
\label{eq:generator}
\bar A_{x_0}\psi(x) =\lambda_0(x)x_0 \left[\psi(x+e_1)-\psi(x)\right]+\sum_{i=1}^d \lambda_i(x)x_i \left[\psi(x-e_i+e_{i+1})-\psi(x)\right]\\
+\sum_{i=1}^d \mu_i(x)x_i \left[\psi(x-e_i)-\psi(x)\right]\,,
\end{multline} 
\modR{where, for $i \in \{1,\dots,d\}$, $e_i$ is a unit vector of $\N^d$, with coordinate $1$ in the $i^{th}$ position and zero elsewhere, and $e_{d+1}$ is the null vector.}

\begin{rem}
	Although the limit system \eqref{eq:sde_limit} may appear quite different from its deterministic counterpart \eqref{eq:ode_limit}, it has the same flavour: the fast variable is in a ``quasi-equilibrium'' at any time $t$. Its law $f_t$ thus needs to solve the equilibrium of the Kolmogorov equations associated \modF{with} SDE \eqref{eq:sde_rescaled}, which are written here with the help of the infinitesimal generator of the fast variable to ease the notations. More precisely, let $A^{\gge}$ \modF{be} the infinitesimal generator associated \modF{with} the process $X^{\gge}$, solution of \eqref{eq:sde_rescaled}, then for all \modF{functions} $\phi : \mathbb{R}_+ \times \mathbb{N}^d \to \mathbb{R} $ bounded and independent of the first variable ( $\forall (x_0,x) \in \mathbb{R}_+ \times \mathbb{N}^d, \quad \phi(x_0,x)=\psi(x)$), we have $$\forall (x_0,x) \in \mathbb{R}_+ \times \mathbb{N}^d, \quad A^{\gge} \phi(x_0,x)=\frac{1}{\gge} \bar A_{x_0}\psi(x).$$ 
	Thus $\overline f_t$ is the stationary solution \modF{corresponding} to $A^{\gge}$ when the slow variable is ``frozen''.
\end{rem} 

System \eqref{eq:sde_limit} is not necessarily well-posed, as there may be several solutions $\bar f_t$ for a given $\bar x_0$.  \modF{In the next two specific examples, we can prove that system (\ref{eq:sde_limit}) does admit a single solution, which is a natural} \modF{limit candidate} for the sequence $X^\gge$.

\begin{example}[Linear case]\label{ex:sde_limit_linear}
	
	Let us suppose that $K_{1,i}=K_{2,i}=0$ for all $i\in \{0,\dots,d\}$, \modR{and $(f_i+g_i)>0$ for all $i\in \{1,\dots,d\}$}. Then, system \eqref{eq:sde_limit} becomes linear and has a unique solution. 
	The invariant measure $\bar f_t$ has a product measure form 
	\begin{equation}\label{eq:_productform}
	\forall x \in \mathbb{N}^d, \quad \bar f_t (x_1,\dots, x_d) = \displaystyle \prod_{i=1}^{d} \bar f^i_t(x_i)\,,
	\end{equation}
	with $\bar f^i_t$ a Poisson law on $\mathbb{N}$ of mean parameter $p_ix_0(t)$, with
	\begin{equation}\label{eq:pi}
	p_i = \prod_{j=0}^{i-1} \frac{f_j}{f_{j+1}+g_{j+1}}\,.
	\end{equation} 
	System \eqref{eq:sde_limit} \modF{then reduces} to
	\begin{equation}\label{eq:sde_limit_linear}
	\left\lbrace \begin{array}{llll}
	\ds &\bar {x}_0(t) &\ds=&\ds x_0^{in}\exp\left(-(f_0+g_0)t\right)\\
	\ds \forall x \in \mathbb{N}^d, \quad &\bar  f_t(x) &\ds=& \ds \prod_{i=1}^{d} \, \left(p_i \, \bar x_0(t) \right)^{x_i} \,\frac{e^{-p_i\bar {x}_0(t)}}{x_i !}
	\end{array}\right.
	\end{equation}
	It is classical that stationary distributions associated \modF{with} the generator \eqref{eq:generator} are of product form for \modR{constant coefficients} $\lambda_i,\mu_i$ (see for instance \cite{Gadgil2005,Kingman1969,Kelly1979}). \modR{Taking the product form in Eq.~\eqref{eq:_productform} for granted, the following calculus shows that each marginal distribution has to be a Poisson law.}
	
	\modR{Indeed}, for a function $\psi$ which depends on the first variable only $$\forall x_1,\cdots, x_d, \quad \psi(x_1,\dots,x_d)=\psi_1(x_1)$$ and for $\bar x_0>0$, we obtain, using expression \eqref{eq:generator}, \begin{align*}\sum_{x\in \N^d} \bar A_{\bar x_0}\psi(x)\bar f_t(x) &= \sum_{x\in \N^d} \{\, [f_0 \bar x_0 (\psi_1(x_1+1)-\psi_1(x_1)) \,+\, (f_1 + g_1) x_1(\psi_1(x_1-1)-\psi_1(x_1)) \,] \\  & \quad \quad \quad \quad \bar f_t^1(x_1) \prod_{i=2}^{d}\bar f_t^i(x_i)\, \}  \\
	&= \big( \sum_{x\in \N} \{\,f_0 \bar x_0 (\psi_1(x+1)-\psi_1(x)) \,+\, (f_1 + g_1) x(\psi_1(x-1)-\psi_1(x)) \} \bar f_t^1(x) \big) \\  & \quad \quad \quad \quad \big(\sum_{(x_2,\dots,x_d) \in \N^{d-1}} \prod_{i=2}^{d}\bar f_t^i(x_i)\, \big) 
	\end{align*}
	Hence, the solution $\bar f_t$ is such that for any bounded function $\psi_1$
	\begin{align*}
	0&= \sum_{x\in \N} \psi_1(x)\{\,f_0 \bar x_0 \big( \bar f_t^1(x-1) \textbf{1}_{x\geq1} -\bar f_t^1(x) \big) \,+\, (f_1 + g_1) \big( \, (x+1) \,\bar f_t^1(x+1)- x \bar f_t^1 (x) \big) \} 
	\end{align*}
	\modF{This holds} in particular for $\psi_1(x)=\textbf{1}_{x=n}$ for any $n \in \mathbb{N}$\modF{, so that} we obtain after calculus $$\forall x \in \mathbb{N}, \quad \bar f_t^1(x) = \frac{1}{x!} \big(\frac{f_0 \, \bar x_0}{f_1 +g_1}\big)^x \bar f_t^1(0),$$
	and then, by same arguments
	$$\forall i \in \{1,\cdots,d\}, \forall x \in \mathbb{N}, \quad \bar f_t^i(x) = \frac{1}{x!} \big( p_i\, \bar x_0\big)^x \bar f_t^i(0)$$ where $p_i$ \modF{is defined} in Eq.~\eqref{eq:pi}.
\end{example}

\begin{example}[A single feedback \modF{onto the quiescent follicle} death rate] 
	Let us suppose that $K_{1,i}=0$ for all $i\in \{0,\dots,d\}$ and $K_{2,i}=0$ for $i\in \{1,\dots,d\}$ but $K_{2,0}>0$. Then, system \eqref{eq:sde_limit} \modR{can be simplified as} 
	\begin{equation}\label{sde_ex2}
	\left\lbrace \begin{array}{llll}
	\ds \frac{d}{dt} \bar x_0(t) &\ds =& \ds - ( f_0 + \bar \mu_0(\bar x_0(t))) \bar x_0(t), \\
	\ds \bar \mu_0(\bar x_0) &\ds =& \ds g_0(1 + \bar x_0 K_{2,0} \sum_{j=1}^d b_j p_j),\\
	\ds \forall x \in \mathbb{N}^d, \quad &\bar  f_t(x) &\ds= \ds \prod_{i=1}^{d} \, \left(p_i \, \bar x_0(t) \right)^{x_i} \,\frac{e^{-p_i\bar {x}_0(t)}}{x_i !},
	\end{array}\right.
	\end{equation}
	where $p_i$ is defined in Eq.~\eqref{eq:pi}, and has a unique solution. \modR{The justification of system \eqref{sde_ex2} follows that of Example \eqref{ex:sde_limit_linear} for the measure $f_t$, which is not directly modified by the feedback term onto the quiescent follicle death rates.}
\end{example}

\subsection{Convergence in the linear case}

In this paragraph, we assume that $K_{1,i}=K_{2,i}=0$ for all $i\in \{0,\dots,d\}$ as in Example \ref{ex:sde_limit_linear}, and we assume \modF{further} that the initial condition $X^{\varepsilon,in}=\left(X_0^{\varepsilon,in},X_1^{\varepsilon,in},\cdots,,X_d^{\varepsilon,in}\right)$ of system \eqref{eq:sde_rescaled} is such that $\frac{1}{\gge}X_0^{\varepsilon,in}$, $X_1^{\varepsilon,in}$,..., $X_d^{\varepsilon,in}$ are independent Poisson random \modF{variables} of mean respectively $x_0^{in}/\gge$, $x_1^{in}$,..., $x_d^{in}$. Denoting by $G^{in}$ the probability generating function of the integer-valued random vector $\left(\frac{1}{\gge}X_0^{\varepsilon,in},X_1^{\varepsilon,in},\cdots,,X_d^{\varepsilon,in}\right)$, we thus have
\begin{equation}\label{eq:IC_stoch_linear}
G^{in}(z)=\mathbb{E}\left\{z_0^{X_0^{\varepsilon,in}/\gge}\prod_{j=1}^d z_j^{X_j^{\varepsilon,in}}\right\}=\exp\left(\frac{x_0^{in}}{\gge}(z_0-1)\right)\prod_{i=1}^d \exp\left(x_i^{in}(z_i-1)\right)
\end{equation}
In such \modF{a} case, one can solve explicitly system \eqref{eq:sde_rescaled} for each $\gge>0$. \modR{We briefly sketch the formal arguments and computations (see \cite{Gadgil2005} for details)}. We define the probability generating function of the vector $\left(X_0^{\varepsilon}(t)/\gge,X_1^{\varepsilon}(t),\cdots,,X_d^{\varepsilon}(t)\right)$, for $\textbf{z}\in \mathbb{R}^{d+1}$, by
\begin{equation}\label{eq:mgf}
G^\gge(\textbf{z},t)=\mathbb{E}\left\{z_0^{X_0^\gge(t)/\gge}\prod_{j=1}^d z_j^{X_j^{\varepsilon}(t)}\right\}=\sum_{i=0}^d\sum_{n_i=0}^\infty \prod_{j=0}^d z_j^{n_j} \mathbb{P}\left\{X_0^\gge(t) =\gge n_0,X_1^\gge(t) =n_1,\cdots,X_d^\gge(t) =n_d \right\}
\end{equation}
The infinitesimal generator $\tilde A^\gge$ of the process $\tilde X^\gge(t):=\left(X_0^{\varepsilon}(t)/\gge,X_1^{\varepsilon}(t),\cdots,,X_d^{\varepsilon}(t)\right)$, is given\modF{, for all bounded functions} $\psi$ on $\N^{d+1}$ and for all $n=(n_0,n_1,\cdots,n_d)\in \N^{d+1}$, by
 \begin{multline}
 \label{eq:generator_2}
 \tilde A^\gge\psi(n) =f_0 n_0 \left[\psi(n-e_0+e_1)-\psi(n)\right]+g_0n_0 \left[\psi(n-e_0)-\psi(n)\right]\\
+ \sum_{i=1}^{d-1} \frac{f_i}{\gge}n_i \left[\psi(x-e_i+e_{i+1})-\psi(x)\right] +\sum_{i=1}^d \frac{g_i}{\gge}n_i \left[\psi(x-e_i)-\psi(x)\right]\,.
 \end{multline} 
Using the Kolmogorov backward equation,
\begin{equation}
\frac{d}{dt}\mathbb{E}\left[\psi\left(\tilde X^\gge(t)\right)\right]=\mathbb{E}\left[\tilde A^\gge\psi\left(\tilde X^\gge(t)\right)\right]
\end{equation}
with $\psi(n)=\prod_{j=0}^d z_j^{n_j}$  \modR{(a truncation procedure is required to deal rigorously with such test functions)}, we obtain, using linearity of expectation, a first-order partial differential equation on $G^\gge$ , given by 
\begin{multline}\label{eq:pde_mgf}
\frac{\partial}{\partial t}G^\gge(\textbf{z},t) = -(f_{0}+g_0)(z_0-1) \frac{\partial}{\partial z_{0}}G^\gge(\textbf{z},t)
+f_0(z_1-1)\frac{\partial}{\partial z_{0}}G^\gge(\textbf{z},t)\\
+\sum_{i=1}^{d-1} (z_{i+1}-1)\frac{f_i}{\gge} \frac{\partial}{\partial z_{i}}G^\gge(\textbf{z},t)-\sum_{i=1}^d (z_i-1)\frac{1}{\gge}(f_i+g_i) \frac{\partial}{\partial z_{i}}G^\gge(\textbf{z},t)
\end{multline} 
It turns out that the unique solution \modF{to} \eqref{eq:pde_mgf} is given by 
\begin{equation}\label{eq:solution_poisson}
G^\gge(z,t)= \exp\left(\frac{x_0^\gge(t)}{\gge}(z_0-1)\right)\prod_{i=1}^d \exp\left(x_i^\gge(t))(z_i-1)\right)\,.
\end{equation}
where $(x_i^\gge(t))_{i=0,\cdots,d}$ is solution of the very same ODE as the linear version of Eq.~\eqref{eq:ode_rescaled} (whose \modF{solutions} are given by Eq.~\eqref{eq:ode_rescaled_solution_linear}), with initial condition $x_i^\gge(t=0)=x_i^{in}$, for all $i=0...d$. Thus, at any time $t$, $X_0^{\varepsilon}(t)/\gge$, $X_1^{\varepsilon}(t)$,..., $X_d^{\varepsilon}(t)$ are independent Poisson random variables of mean (respectively) $\frac{x_0^\gge(t)}{\gge}$, $x_1^\gge(t)$, ..., $x_d^\gge(t)$.
	
	\begin{prop}\label{prop:sde_cv_linear} Assume that $K_{1,i}=K_{2,i}=0$ for all $i\in \{0,\dots,d\}$, that $f_i+g_i$ is strictly positive for $i\in \{1,\dots,d\}$ and that the initial condition $X^{\varepsilon,in}$ is such that Eq.~\eqref{eq:IC_stoch_linear} holds. Then, for all $t>0$, $X_0^{\varepsilon}(t)$ converges (in law) towards the deterministic value $\bar x_0(t)=x_0^{in}\exp(-(f_0+g_0)t)$, and, for all $i=1,\cdots,d$, $X_i^{\varepsilon}(t)$ converges (in law) towards a Poisson random variable of mean $\bar x_i(t)= p_i \bar x_0(t)$, where $p_i$ is given by Eq.~\eqref{eq:pi}.
	\end{prop}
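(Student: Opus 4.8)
The plan is to leverage the explicit generating function \eqref{eq:solution_poisson}, established immediately above the statement, which reduces the entire convergence question to a convergence of probability generating functions (equivalently characteristic functions, via Lévy's continuity theorem). The key structural fact already recorded is that, at any fixed time $t$, the coordinates $X_0^\gge(t)/\gge, X_1^\gge(t),\dots,X_d^\gge(t)$ are \emph{independent}, with $X_0^\gge(t)/\gge$ Poisson of mean $x_0^\gge(t)/\gge$ and each $X_i^\gge(t)$ Poisson of mean $x_i^\gge(t)$. Since the generating function factorizes, it suffices to treat the slow coordinate and each fast coordinate in isolation; joint convergence in law (to $\bar x_0(t)$ times independent Poisson variables) then follows automatically from the product form.

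For the slow variable, I would first recall that $x_0^\gge(t)=\bar x_0(t)$ exactly, the $0$-th equation being decoupled and independent of $\gge$, so that $X_0^\gge(t)=\gge N_\gge$ with $N_\gge$ Poisson of mean $\bar x_0(t)/\gge$. The characteristic function of $X_0^\gge(t)$ is then $\exp\!\bigl(\tfrac{\bar x_0(t)}{\gge}(e^{i\theta\gge}-1)\bigr)$; expanding $e^{i\theta\gge}-1=i\theta\gge+O(\gge^2)$ shows that, for each fixed $\theta$, it converges to $e^{i\theta\bar x_0(t)}$ as $\gge\to 0$, i.e.\ to the characteristic function of the constant $\bar x_0(t)$. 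Lévy's continuity theorem then yields convergence in law of $X_0^\gge(t)$ to the deterministic value $\bar x_0(t)$. Equivalently, one may simply note $\mathbb{E}[X_0^\gge(t)]=\bar x_0(t)$ and $\mathrm{Var}(X_0^\gge(t))=\gge\,\bar x_0(t)\to 0$, and conclude by Chebyshev's inequality.

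For each fast variable $X_i^\gge(t)$, $i\in\{1,\dots,d\}$, the generating function is $\exp\!\bigl(x_i^\gge(t)(z_i-1)\bigr)$, that of a Poisson law of mean $x_i^\gge(t)$. The only point to verify is the convergence of the mean: fixing $t>0$ and choosing $0<\eta<t$, Proposition \ref{prop:ode_cv_linear} gives $x_i^\gge(t)\to\bar x_i(t)=p_i\bar x_0(t)$ as $\gge\to 0$. Passing to the limit in the generating function yields $\exp\!\bigl(\bar x_i(t)(z_i-1)\bigr)$, the generating function of a Poisson law of mean $\bar x_i(t)$, whence convergence in law of $X_i^\gge(t)$ to that Poisson variable.

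I expect no serious obstacle here: the substantive analytic work has already been carried out in deriving the closed-form generating function \eqref{eq:solution_poisson} and in the ODE convergence of Proposition \ref{prop:ode_cv_linear}. The two points requiring a little care are (i) the law-of-large-numbers behaviour of a Poisson variable of \emph{diverging} mean $\bar x_0(t)/\gge$, which renders the rescaled slow coordinate deterministic in the limit; and (ii) the fact that convergence holds for each fixed $t>0$ only, not uniformly down to $t=0$, mirroring Remark \ref{rem:ode2} — which is precisely why Proposition \ref{prop:ode_cv_linear} is invoked with a cutoff $\eta<t$ rather than at the initial time.
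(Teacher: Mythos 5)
Your proposal is correct and follows essentially the same route as the paper's proof: both rest on the explicit product-form generating function \eqref{eq:solution_poisson}, dispatch the slow coordinate by the vanishing-variance (or equivalently characteristic-function) argument for a Poisson variable of diverging mean, and deduce the fast coordinates' convergence from the convergence of the means $x_i^\gge(t)\to p_i\bar x_0(t)$ supplied by Proposition \ref{prop:ode_cv_linear}. (Minor aside: your variance computation $\mathrm{Var}(X_0^\gge(t))=\gge\,\bar x_0(t)$ is the correct one; the paper's $\gge\,\bar x_0^2(t)$ is a slip, though the conclusion is unaffected.)
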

	
	\begin{proof}
	The proof is a direct consequence of the explicit solution \eqref{eq:solution_poisson} for the probability generating function $G^\gge$ \modF{combined with} Proposition \ref{prop:ode_cv_linear}. As $X_0^{\varepsilon}(t)/\gge$ is a Poisson random variable of mean $\frac{\bar x_0(t)}{\gge}$, it is clear that $\mathbb{E}\left[X_0^{\varepsilon}(t)\right]=\bar x_0(t)$ and $var\left(X_0^{\varepsilon}(t)\right)=\gge \bar x_0^2(t)\leq \gge x_0^{in}\to0$ as $\gge\to0$, which implies that $X_0^{\varepsilon}(t)$ converges in law towards $\bar x_0(t)$. For all $i=1,\cdots,d$, and $t>0$, as $x_i^{\varepsilon}(t)$ converges to $p_i \bar x_0(t)$ as $\gge\to\modF{0}$, $X_i^{\varepsilon}(t)$ converges (in law) towards a Poisson random variable of mean $p_i \bar x_0(t)$.
	\end{proof}	
	
	\begin{rem}\label{rem:sde}

The slow variable $X_0^{\gge}$ is independent of the fast variables and its \modF{dynamics are} reduced to a ``death and death'' process with constant rate. Then convergence of this variable is given by Theorem 8.1 of \cite{kurtz}, which \modF{enables us to obtain stronger results with fewer hypotheses} on the initial condition. Suppose that 
$$\displaystyle \lim_{\gge \to 0} X_0^{\gge,in}=x_0^{in} \quad a.s$$ 
then 
$$\lim_{\gge \to 0} \sup_{s \leq t} \vert X_0^{\gge}(s) - \bar x_0(s)\vert = 0 \quad a.s \; \; \text{ for all } t >0.$$

	\end{rem}
	
    \subsection{\modFF{Numerical convergence}}
    
    In this paragraph, we illustrate the convergence of $(X_0^\gge, X_1^\gge,\cdots, X_d^\gge)$ as $\varepsilon\to0$.  The chosen scenario and the parameter values are detailed in \modF{the Appendix} (section \ref{sec:annex}).
    
    		\begin{figure}
    			\includegraphics{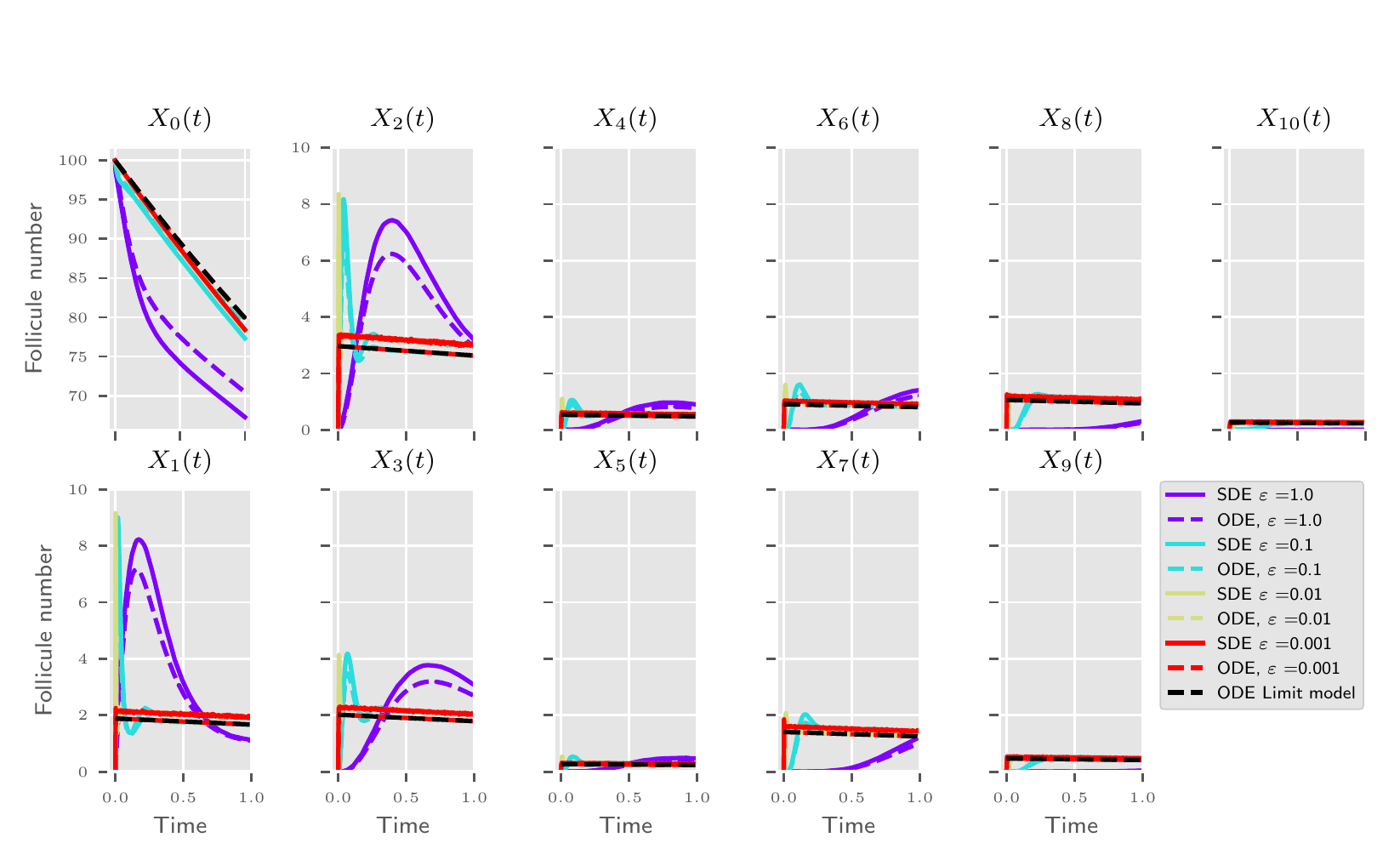}
    			\caption{\modF{In the same way as} in Figure \ref{fig:ode_lim_mod}, we plot the trajectories \modF{in each maturity} compartment ($d=10$), for the rescaled variables $X_i^\varepsilon$ of the SDE system \eqref{eq:sde_rescaled} (\modF{solid} lines) and the ODE system \eqref{eq:ode_rescaled} (dashed lines), for different $\varepsilon$ (see legend). For the SDE, we plot the empirical mean computed over 10000 trajectories. \modF{The} limit variable $\bar X_i$  of the ODE system \modF{corresponds to} the black dashed line.}
    			\label{fig:sde_lim_mod}
    		\end{figure}
    		
    		\begin{figure}
    			\includegraphics{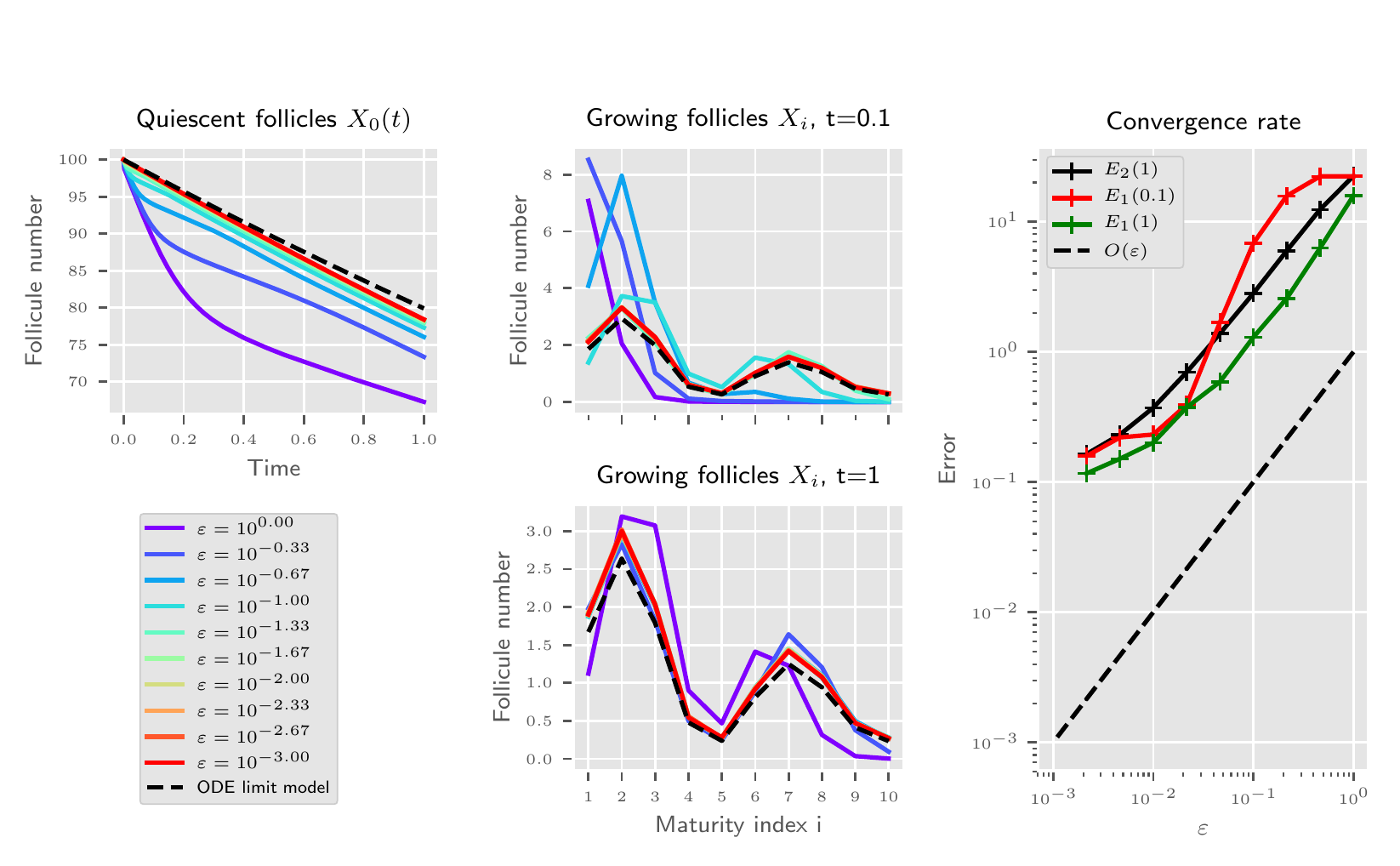}
    			\caption{Trajectories \modF{in} the quiescent follicle compartment (top left panel) and \modF{distribution} of the growing follicle population at time $t=0.1$ (top center panel) and at time $t=1$ (bottom center panel), for the rescaled variables $X_i^\varepsilon$, for different $\varepsilon$ (\modF{solid} colored lines, see \modF{legend insert}) and the limit variable $\bar X_i$ (\modF{black dashed line}). On the right panel, we plot the discrete $l^1$ norm error $E_1(t)$ at a \modF{fixed time} $t=0.1$ and $t=1$ (\modF{solid} red and green lines, resp.) and the $l^1$-cumulative error $E_2(1)$ on $t\in(0,1)$ (black \modF{solid} line) as a function of $\varepsilon$ (see details in \modF{body} text). The black dashed line is the \modF{straight line of slope 1 according} to $\varepsilon$.}
    			\label{fig:sde_lim_mod_hist}
    		\end{figure}
    
    In Figure \ref{fig:sde_lim_mod}, we plot the empirical mean trajectories \modR{(computed over $10^4$ sampled trajectories)} \modF{in each maturity} compartment ($d=10$) for the rescaled model on a time horizon $t\in(0,1)$ in the nonlinear scenario ($K_{1\modF{,}0}>0$), together with the trajectories of the analogous ODE rescaled system \eqref{eq:ode_rescaled} and its limit \eqref{eq:ode_limit}. \modR{We observe that, for each compartment, the empirical mean of the SDE seems to converge \modFF{to a limit value}, yet \modFF{this limit} does not superimpose with the ODE limit solution (which is expected in a nonlinear scenario, as the ODE and SDE limits are different).}
    
    In Figure \ref{fig:sde_lim_mod_hist}, \modF{using} the same parameters as in Figure \ref{fig:sde_lim_mod}, we \modF{display the maturity distribution in} the growing follicle population, for various $\varepsilon$. We empirically quantify the convergence rate using the following error, at time $t$,
    \begin{equation}
    E_1(t)= \sum_{i=0}^d \mid \mathbb{E} X_i^{\varepsilon}(t)-\mathbb{E}\bar X_i(t) \mid\,,
    \end{equation}
    and the cumulative error on time interval $[0,T]$,
    \begin{equation}
    E_2(T)= \int_0^T \sum_{i=0}^d \mid \mathbb{E}X_i^{\varepsilon}(t)-\mathbb{E}\bar X_i(t) \mid dt\,,
    \end{equation}
    \modF{which can be assessed numerically as}
    \begin{equation}
    \tilde E_2(T)= \sum_{k=0}^{N_T} \delta_t \sum_{i=0}^d \mid \mathbb{E}X_i^{\varepsilon}(t_k)-\mathbb{E} \bar X_i(t_k) \mid\,,
    \end{equation}
    where $t_k=k\delta_t$, for $k=0\modF{\cdots}N_T$. \modFF{In practice, we} also replace the limit model $\bar X$ by the numerically evaluated \modF{limit model} $ X^\varepsilon$ with $\varepsilon=0.001$. We then observe that the error \modF{decreases roughly} linearly with $\varepsilon$.
    
    \begin{figure}[h!]
    			\includegraphics{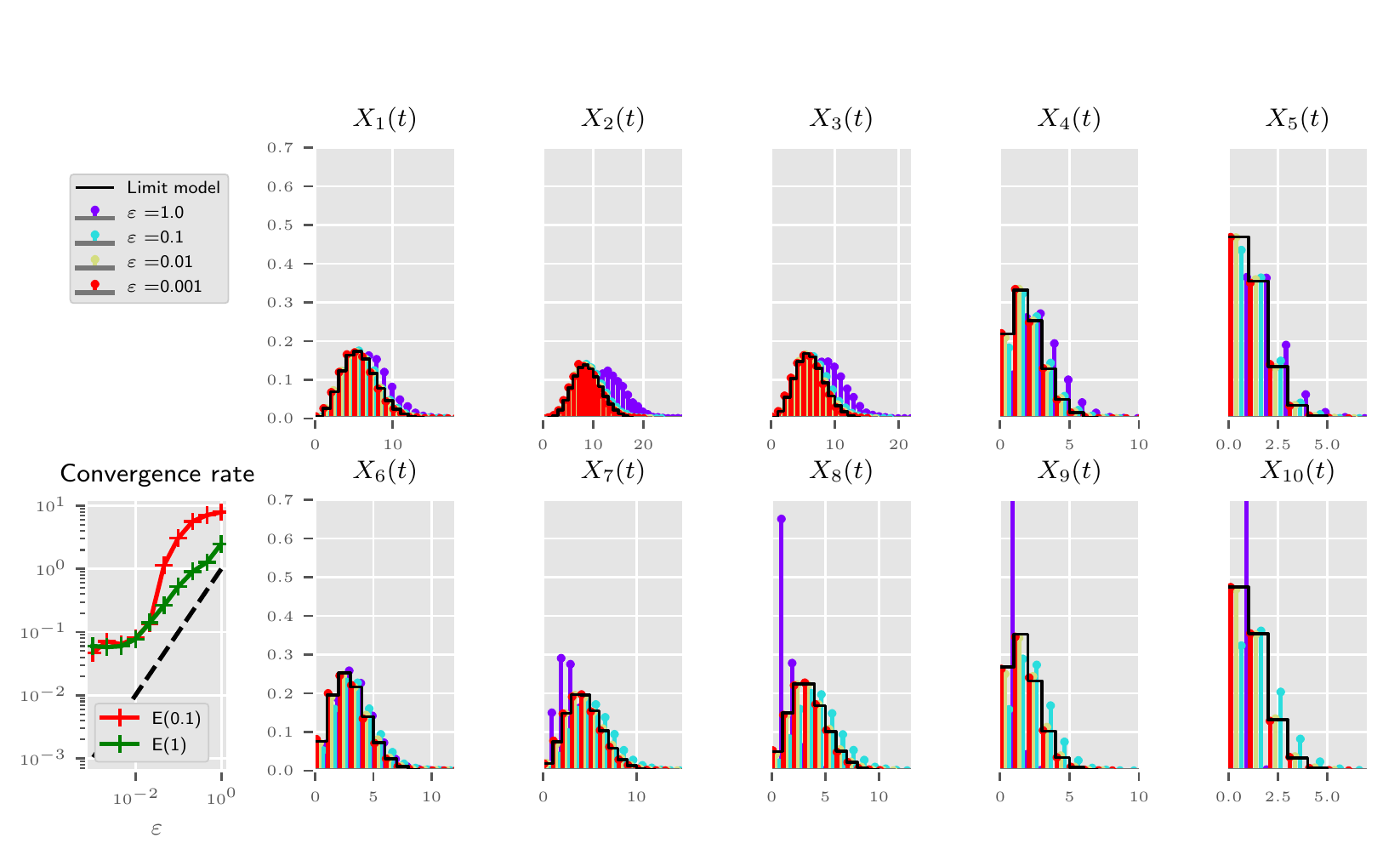}
    			\caption{Empirical law \modF{of $X_i^\varepsilon$ in each maturity compartment} at time $t=1$ for different $\varepsilon$ in colored bars (see legend \modF{insert}) and the limit distribution $\bar X_i$ (\modF{black solid} lines). On the bottom leftmost panel, we plot the total variation error $E(t)$ at a \modF{fixed time} $t=0.1$ and $t=1$ (\modF{solid} red and green lines, resp.). The black dashed line is the \modF{straight line of slope 1 according} to $\varepsilon$.}
    			\label{fig:sde_lim_mod_law}
    \end{figure}
    		
    In Figure \ref{fig:sde_lim_mod_law}, we use the linear scenario ($K_{1\modF{,}0}=0$) detailed in \modF{the Appendix} (section \ref{sec:annex}) to visualize the convergence of the fast variable of the rescaled SDE to the ``quasi-stationary'' distribution of the limit model. The marginals of the rescaled model are evaluated \modR{over $10^4$ sampled trajectories at time $t=0.1$ and $t=1$.}
    The \modF{errors} between the marginal laws of the rescaled and limit models \modF{are} quantified by the total variation (restricted on the support of the \modF{numerically assessed limit} model):
    \begin{equation}
    E(t)= \sum_{i=1}^d d_{TV}(X_i^{\varepsilon}(t),\bar X_i(t))\,,
    \end{equation}
    where
    \begin{equation}
    d_{TV}(X,Y)=\max\{\mid \pi_X(i)-\pi_Y(i) \mid\,, i\in \mathbb{N}\}\,,
    \end{equation}
    and $\pi_X$, $\pi_y$ are, respectively, the \modF{laws} of $X$ and $Y$. \modR{The error seems to decrease in a sub-linear manner with $\varepsilon$, with a plateau for $\gge<10^{-2}$, which is probably due to the limited finite sampling size ($10^4$). }
    
\section{\modF{Appendix} - parameter values}\label{sec:annex}
\modF{In the numerical illustrations provided throughout the previous sections, we refer to either a linear or nonlinear scenario. As far as parameter values, the only difference is that parameter $K_{1,0}$ is set to $0$ in the linear scenario. All other parameters are identical and chosen as explained below.}
\subsection*{\modF{Numerical simulation of} the PDE model}

\modF{We begin by shaping the desired solution $H(x)$ and we choose functions $f$ and $g$ accordingly:}
	$$H(x)=\displaystyle e^{-\displaystyle\int^x_0 \displaystyle\frac{g(y)+f'(y)}{f(y)}dy}.$$
 $$\modF{\mbox{Choosing }}\,g(x)=1,\quad\mbox{\modF{we get} }\quad f(x)=\displaystyle \frac{c-\int_0^x H(y)dy}{H(x)}$$

\modF{Motivated by our biological application, we more specifically select a two-bump function}
	$$H(x)=\dfrac{p_1e^{-\dfrac{(x-x_1)^2}{2s^2}} +p_2e^{-\dfrac{(x-x_2)^2}{2s^2}}}{p_1e^{-\dfrac{x_1^2}{2s^2}} +p_2e^{-\dfrac{x_2^2}{2s^2}}}$$
	with $s=0.1$, $x_1=0.2$, $p_1=0.7$, $x_2=0.7$ and $p_2=0.3$. 
	
	\modF{Except $K_{1,0}$, all coefficients weighting the nonlinear terms,  $K_{20}$, $K_1$, $K_2$, are set to zero. As a result, functions $b$, $\omega_1$ and $\omega_2$, that were introduced for the sake of genericity, are not used in the numerical illustrations.}
	
	The basal \modF{activation rate $f_0$ is set to $f_0=1$}, and the basal death rate \modF{in quiescent follicles $g_0$ is set to} $g_0=0.1$.\\ 
	\modF{Since the population feedback onto the activation rate is mainly exerted by follicles in an intermediate maturity stage}, we choose $a(x)=1_{[0.3,0.7]}$, \modF{given that the state space lies in $x\in[0,1]$}. The \modF{feedback gain} is \modF{set to} $K_{1,0}=2$.

\modF{Finally, the time horizon covers $t\in(0,1)$, and the initial condition is given by} $\rho_0^{ini} = 100$ and $\rho^{ini} \equiv 0$.

 The parameter values are \modF{summed} up in Table \ref{tab_param_1} \modF{and illustrated on Figure \ref{fig:param}}.
 
	\begin{table}[H]
		\begin{tabular}{|c|c|c|c|c|c|c|c|}
			\hline
			$s$  & $p_1$  & $x_1$ & $p_2$ & $x_2$& $c$& & \\
			\hline
			0.1 & 0.7  &0.2 & 0.3 & 0.7 &2.61& & \\
			\hline
		$f_0$  & $g_0$  & $K_{1\modF{,}0}$ & $K_{2\modF{,}0}$ & $\rho^0_{ini}$&$a$  & $K_{1,d},d\neq 0$ & $K_{2,d},d\neq 0$\\
			\hline
			1 & 0.1  &2 & 0 & 100& $1_{[0.3,0.7]}$ & 0 & 0\\
			\hline
		\end{tabular}
		\caption{Parameter values for the numerical simulations}
		\label{tab_param_1}
	\end{table}
	
\begin{figure}[h!]
	\includegraphics{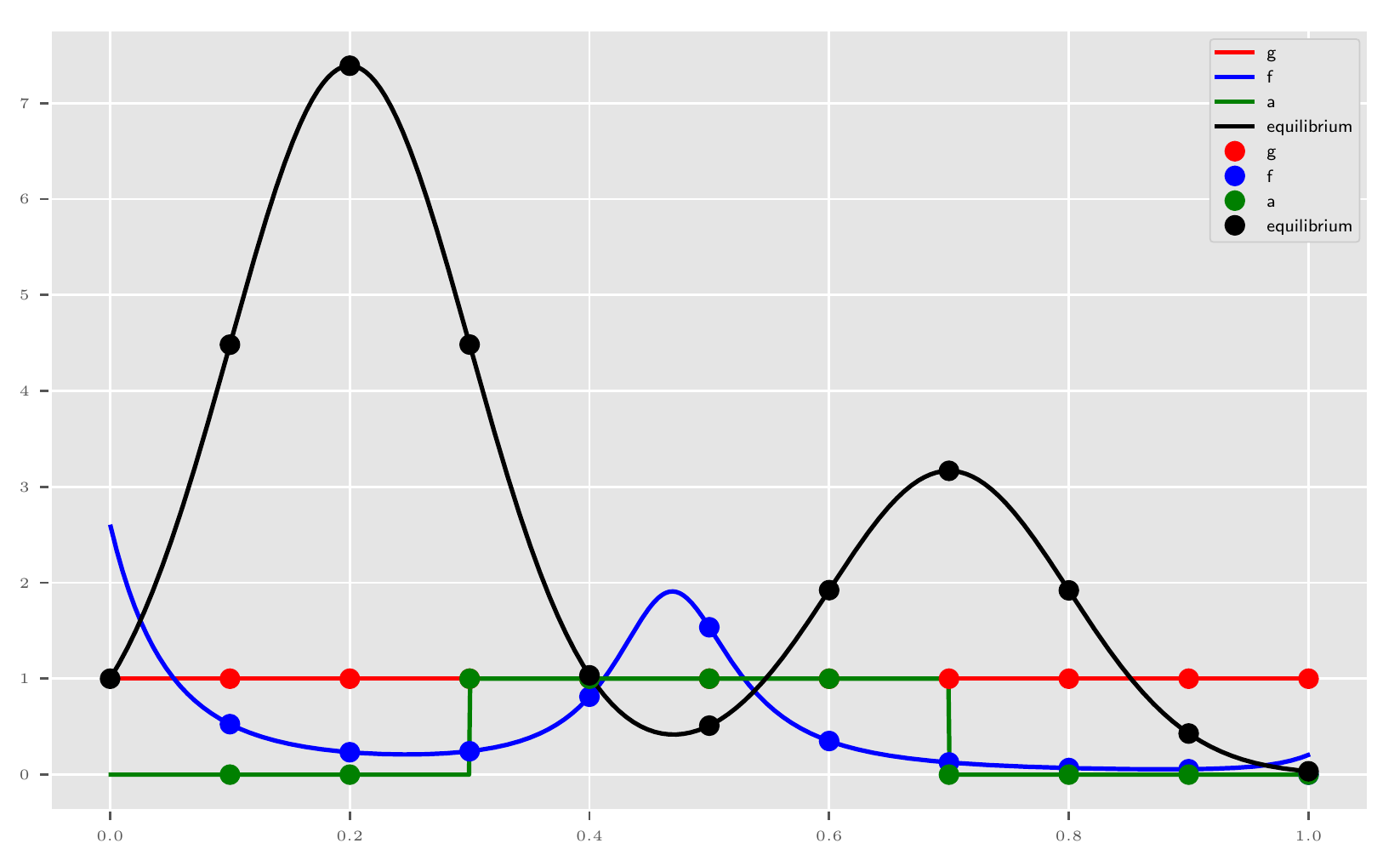}
	\caption{\modF{Parametric functions} used in the numerical simulations, for the PDE (plain lines), the ODE and the SDE (dot symbols).}
	\label{fig:param}
\end{figure}

\subsection*{\modF{Numerical simulation of the ODE/SDE} model}

For the \modF{ODE- and SDE-based models, we first set the number of compartments}, $d=10$, and define $x_i = i/d$. Then, we \modF{adapt the functions selected in the continuous PDE case and set}, for $i\in \{1,d\}$, 
\begin{itemize}
\item $a_i=a(x_i)$
\item $f_i=f(x_i)/d$
\item $g_i=g(x_i)$
\item $K_{1,i}=K_{2,i}= 0$
\end{itemize}
All other parameters \modF{are kept as in }Table \ref{tab_param_1}, \modF{while the initial condition is chosen as $Y_0^{ini} = 100$ and $Y_i^{ini}= 0$, $i\in\{1,d\}$}.

\modF{To simulate the ODE model}, we use the standard python scipy.odeint, \modF{while, to simulate the SDE model}, we use an exact stochastic simulation algorithm (Gillespie).




\bibliographystyle{plain}
\bibliography{biblio_cemracs}
\end{document}